\title{Massless QFT and the  Newton-Wigner Operator}
\author{A. Much\\ \footnotesize{Instituto de Ciencias Nucleares, UNAM, M\'exico D.F. 04510, M\'exico},
	\\ \footnotesize{Faculty of Mathematics, University of Vienna, 1090 Vienna, Austria}}
\newtheorem{theorem}{\textsc{Theorem}}[section]
\newtheorem{lemma}{\textsc{Lemma}}[section]
\newtheorem{proposition}{\textsc{Proposition}}[section]
\theoremstyle{definition}
\newtheorem{definition}{\textsc{Definition}}[section]
\newtheorem{convention}{Conventions}[section]
\theoremstyle{remark}
\newtheorem{remark}{Remark}[section]
\newcommand{\R}{\mathbb{R}}
\numberwithin{equation}{section} 
\begin{document}
	\maketitle	\abstract{ In this work, the second-quantized version of the spatial-coordinate operator, known as the  Newton-Wigner-Pryce operator, is explicitly given  w.r.t. the massless scalar field. Moreover,   transformations of the conformal group are calculated on eigenfunctions of this operator in order to investigate the covariance group w.r.t. probability amplitudes of localizing particles.} 
	\tableofcontents

\section{Introduction} 
Localization is  from a mathematical and physical point of view considered to be an unresolved issue in a relativistic context, see for example  \cite{NW49}, \cite{F1},  \cite{H2}, \cite{H1}  and references therein. The concept is well-defined from a non-relativistic quantum-mechanical (QM) point of view. However, by combining   concepts of QM and relativity, from which the frame-work of quantum field theory (QFT) emerged, the localization issue was never completely resolved. This is not merely an issue of construction but points to a more fundamental question about  the existence of localizable particles.
\\\\
The authors in \cite{NW49} tackled the problem by demanding certain requirements from the states of a coordinate operator, which are used to localize elementary systems. Independently, among other operators the author in \cite{PR48} gave as well a definition of an appropriate coordinate operator, which is equivalent to the one obtained by \cite{NW49}. Although the operator is unique for massive particles with arbitrary spin and for every massless system with spin $0,\,1/2$, it has two major physical setbacks. First,   eigenstates of the Newton-Wigner-Pryce (NWP) operator  propagate superluminally and moreover the states are not covariant w.r.t. general Lorentz-transformations. Hence, there are two different opinions on this subject; either such localizable particles do not exist in a relativistic context or one can still make   physical sense of the Newton-Wigner-Pryce   operator and its respective states. 
\\\\
In this work, we take the second point of view and give, in our opinion, convincing arguments why the NWP sates are of physical interest. In particular, the importance of these states relies on the fact that they can be used to calculate the probability amplitude  of finding a particle at a certain spatial-position at time $t$. Moreover, the second quantization of this operator allows us  to calculate the probability amplitude  of finding $k$-particles at   certain spatial-positions at time $t$.
\\\\
In order to  have a richer example of the NWP-space in a quantum field theoretical context (than the massive scalar field \cite{Muc7}), we work with the free \textbf{massless} scalar field. In particular, we show that the localizability of  particles, in the massless case, is    not a lost cause. In general, the covariance group in the massless case is more than solely the Poincar\'e group,  it is given by the conformal group. Hence,   we investigate the covariant transformation property of NWP-states under conformal transformations. The importance therein has a physical reasoning. It is first of all done in order to prove that although the NWP-states are non-covariant w.r.t. to the whole conformal group there exist transformations   of a subgroup of the conformal group that preserve their covariant character.  Furthermore, given two observers  measuring, for a given covariant state, the probability of finding $k$-particles at $k$-spatial-positions. If these two observers are in two relatively to each other transformed  frames, for example one is rotated, space-time translated or dilated to the other  then they can agree on the observation of the probability of spatial positions, by using the covariant transformation rules.
\\\\
Besides  calculations of  position-probability amplitudes of particles, the NWP-operator has an additional physical relevance. In this work we prove that by solely using the coordinate  and relativistic momentum operator, one is able to define all generators of the conformal group. Hence, in some sense, the NWP-operator can be seen as essential for implementing relativistic symmetries. 
\\\\
Another motivation besides understanding localization   to translate the conformal group into the coordinate space, is non-commutative (NC) geometry applied to quantum field theory (QFT). In particular the geometry that is quantized in NCQFT is the geometry of the NWP-space. Hence, in order to understand the outcome and the action of a deformation of  QFT using, for example (see   \cite{MUc1})  the conformal group,  we need a better understanding of the commutative case.\\\\
The paper is organized as follows; Section two gives a quick introduction to the tools needed in this paper, where we define the Fock space of the free massless scalar field, the  Newton-Wigner-Pryce operator and  the algebra of the Fourier-transformed ladder operators.   In section four, the conformal algebra  for a massless scalar field is transformed into the NWP-space.
Section five investigates the covariance of the NWP-space under the Conformal group.

 \begin{convention}
 	We use $d=n+1$, for $n\in\mathbb{N}$ and the Greek letters are split into  $\mu, \,\nu=0,\dots,n$. Moreover, we use Latin letters for the spatial components which run from $1,\dots,n$ and we choose the following convention for the Minkowski scalar product of $d $-dimensional vectors, $a\cdot b=a_0b^0+a_kb^k=a_0b^0- \vec{a}\cdot\vec{b}$. Furthermore, we use the common symbol $\mathscr{S}(\mathbb{R}^n)$ for the Schwartz-space.
 \end{convention} 
\section{Preliminaries}
\subsection{Bosonic Fock-space and Fourier-transformation}
We briefly define  the  Bosonic Fock 
space for a free scalar field 
 $\phi$ with mass $m=0$ on the $n+1$-dimensional Minkowski spacetime. In particular a  particle with momentum $\mathbf{p} \in \mathbb{R}^n$ has 
 the energy $\omega_{\mathbf{p} }$ given by $\omega_{\mathbf{p} }=|\mathbf{p}|$. Another quantity   needed for the definition of the Fock-space is the Lorentz-invariant measure: $d^n\mu(\mathbf{p} )=d^n\mathbf{p}/( {2\omega_{\mathbf{p}}})$.
 \begin{definition} The Bosonic Fock space $\mathscr{H}^{+}$ is defined 
 	as in \cite{Fr,S}:
 	\begin{equation*}
 	\mathscr{H}^{+}=\bigoplus_{k=0}^{\infty}\mathscr{H}_{k}^{+}
 	\end{equation*}
 	where the $k$ particle subspaces are given as
 	\begin{align*}
 	\mathscr{H}_{k}^{+}&=\{\Psi_{k}: \partial V_{+} \times  \dots \times 
 	\partial V_{+} \rightarrow \mathbb{C}\quad \mathrm{symmetric}
 	|\\ &\left\Vert  \Psi_k \right\Vert^2 =\int 
 	d^n\mu(\mathbf{p_1})\dots\int d^n\mu(\mathbf{p_k})
 	|\Psi_{k}(\mathbf{p_1},\dots,\mathbf{p_k})|^2<\infty\},
 	\end{align*}
 	with
 	\begin{equation*}
 	\partial V_{+}:=\{p\in \mathbb{R}^{n+1}|p^2=0,p_0>0\}.
 	\end{equation*}
 \end{definition}
 The particle annihilation and creation operators can be defined by their 
 action on $k$-particle wave functions
 \begin{align*}
 (a_c(f)\Psi)_k(\mathbf{p_1},\dots,\mathbf{p_k})&=\sqrt{k+1}\int 
 d^n\mu(\mathbf{p})\overline{f(\mathbf{p})}
 \Psi_{k+1}(\mathbf{p},\mathbf{p_1},\dots,\mathbf{p_k})\\
 (a_c(f)^{*}\Psi)_k( \mathbf{p },\mathbf{p_1},\dots,\mathbf{p_k})&= \left\{
 \begin{array} {cc}
 0, \qquad &k=0 \\ \frac{1}{\sqrt{k}}\sum\limits_{l=1}^{k} f(\mathbf{p_l})
 \Psi_{k-1}(\mathbf{p_1},\dots,\mathbf{p_{l-1}},\mathbf{p_{l+1}},\dots,\mathbf{p_k}),\quad 
 &k>0
 \end{array} \right.
 \end{align*}
 with $f \in \mathscr{H}_{1} $ and $\Psi_k \in \mathscr{H}_{k}^{+}$ . The commutation relations of the smeared annihilation and creation operators $a_c(f), 
 a_c(f)^{*}$ follow from their action on the functions $\Psi$ and are given as follows
 \begin{align*}
 [a_c(f), a_c(g)^{*}]=\langle f,g\rangle=\int d^n\mu(\mathbf{p}) \overline{f(\mathbf{p})} 
 g(\mathbf{p}), \qquad
 [a_c(f), a_c(g)]=0=[a_c^{*}(f), a_c^{*}(g)].
 \end{align*}
The ladder operators  with sharp momentum are 
 introduced as operator valued distributions in the following
 \begin{align*}
 a_c(f)=\int d^n\mu(\mathbf{p}) \overline{f(\mathbf{p})}a_c(\mathbf{p}), 
 \qquad a_c(f)^{*}=\int d^n\mu(\mathbf{p}) {f(\mathbf{p})}a_c^{*}(\mathbf{p}),
 \end{align*}
 where the  ladder operators with sharp 
 momentum satisfy the well-known  commutator relations
 \begin{align*}
 [a_c(\mathbf{p}), a_c(\mathbf{q})^{*}]=2\omega_{\mathbf{p} 
 }\delta^n(\mathbf{p}-\mathbf{q}), \qquad
 [a_c(\mathbf{p}), a_c(\mathbf{q})]=0=[a_c^{*}(\mathbf{p}), a_c^{*}(\mathbf{q})].
 \end{align*}
 In the following sections, we use the non-covariant normalization given by, 
 
 \begin{align}
 {a}(\mathbf{p})= \frac{{a}_c(\mathbf{p})}{\sqrt{2\omega_{\mathbf{p} 
 	}}},\qquad {a}^{*}(\mathbf{p})= \frac{{a}^{*}_c(\mathbf{p})}{\sqrt{2\omega_{\mathbf{p} 
 }}}.
 \end{align}
 Next, we define the Fourier-transformation of the ladder operators that were given in the former expressions. These transformations are of physical importance in regards to the following sections.  
  \begin{definition}\label{sec2.2}\textbf{Fourier-transformation}\\\\
  	In order to change from momentum space to the NWP-space we use   explicit expressions for the Fourier-transformed creation and annihilation operators which are given by,
  	\begin{align*}
  	{a}(\mathbf{p})=(2\pi)^{-n/2} \int
  	d^n \mathbf{x}\, e^{ip_kx^k} \tilde{a}(\mathbf{x}),\qquad {a}^{*}(\mathbf{p})=(2\pi)^{-n/2} \int
  	d^n \mathbf{x}\, e^{-ip_kx^k} \tilde{a}^*(\mathbf{x}).
  	\end{align*}
  	The commutation relation between the ladder operators in momentum space gives us directly  the relations for   ladder operators of the NWP space, 
  	\begin{align*}
  	\delta^n(\mathbf{p}-\mathbf{q})&=[{a}(\mathbf{p}),{a}^*(\mathbf{q})]=(2\pi)^{-n} 
  	\iint
  	d^n \mathbf{x}\,d^n \mathbf{y}\, e^{ip_kx^k}\, e^{-iq_ky^k}\underbrace{ [\tilde{a}(\mathbf{x}),\tilde{a}^*(\mathbf{y})]}_{=\delta^n(\mathbf{x}-\mathbf{y})}.
  	\end{align*} 
  Inverse Fourier-transformations of momentum space operators to the NWP-space ladder operators  are given by,
  	\begin{align}\label{inf}
  	\tilde{a}(\mathbf{x})=(2\pi)^{-n/2} \int
  	d^n \mathbf{p}\, e^{-ip_kx^k}{a}(\mathbf{p}),\qquad 
  	\tilde{a}^*(\mathbf{x})=(2\pi)^{-n/2} \int
  	d^n \mathbf{p}\, e^{ ip_kx^k}{a}^*(\mathbf{p}).
  	\end{align}

  \end{definition} 
 
\subsection{Massless NWP-operator}
The   (spatial)-QFT-position operator for a massive free scalar  field is defined by the  Newton-Wigner-Pryce  operator, \cite{PR48} and \cite{NW49}. On a one particle wave-function it acts as follows, \cite[Chapter 3c, Equation 35]{Sch}
 \begin{equation}\label{NWP}
 (X_{j} \varphi)(\mathbf{p})=-i \left( \frac{p_j}{2\omega_{m,\mathbf{p}}^2}
 +   \frac{\partial}{\partial p^j } 
 \right)\varphi(\mathbf{p}),
 \end{equation}
  where $\omega_{m,\mathbf{p}}$ is the relativistic-energy for a   particle with mass $m$. 
 In order to prove that the NWP-operator is in the massless case equivalently represented as in the massive case, we can proceed in different ways. First, we can take Equation (\ref{NWP}) and perform the massless limit. Since, the energy $\omega_{m,\mathbf{p}}$ is the only term that is affected by the limit, the  massless  NWP-operator is given by exchanging in Formula (\ref{NWP})  the massive energy with $\omega_{\mathbf{p}}$. Therefore, the coordinate operator for a massless QFT has the same form as in the massive case. Another possible path to proceed is to define the coordinate operator as unitary equivalent to the  second-quantized spatial-component of the relativistic momentum operator. In terms of the  Fock space operators the relativistic momentum operator is given  as,
 \begin{align}\label{mop}
 P_{\mu}&= \int d^n \mathbf{p}\, p_{ \mu}\, {a}^*(\mathbf{p}) {a}(\mathbf{p})
 .
 \end{align} 
 The unitary equivalence of the coordinate   to the spatial momentum operator is given by the unitary map represented by the Fourier-transformation. In particular, it was proven that the NWP-operator can be represented as a self-adjoint operator on the  domain,  $\bigotimes_{i=1}^k \mathscr{S}(\mathbb{R}^n)$, with  $\mathscr{S}(\mathbb{R}^n)$ denoting the Schwartz space, for details see \cite{Muc2} and \cite{Muc3}. The explicit Fock-space representation of the spatial-coordinate operator  is given by,
 \begin{align*}
 X_j&=-i\int d^n \mathbf{p}\,  {a}^*(\textbf{p}) \frac{\partial}{\partial p^j}  {a}(\textbf{p}).
 \end{align*} 
 Since the spatial-momentum operator has the same form in the massless case as in the massive one, the NWP-operator, defined through the unitary equivalence, takes the same form as in the massive case. This is stems from the fact that   ladder operators do not explicitly depend on the mass.  The commutation relations between the spatial-momentum operator and the NWP-operator are given by, see \cite{SS} or \cite{Muc3},
 \begin{align}\label{ccr}
 [X_j,P_k]&=-i\eta_{jk}N,
 \end{align} 
 where $N$ is the particle-number operator represented in Fock-space as
 \begin{align}\label{pn}	N=  \int
 d^n \mathbf{p}\,     {a}^*(\mathbf{p}) {a}(\mathbf{p}).
 \end{align}
 Next, we want to give the physical interpretation and relevance of this operator.   The eigenfunctions of the Newton-Wigner-Pryce  operator,   simultaneously representing the localized wave functions at time $x_{0}=0$, are given by, \cite[Chapter 3, Equation 38]{Sch}
\begin{align*} 
\Psi_{\mathbf{x},0}(\mathbf{p})=(2\pi)^{-n/2}\,e^{-i\mathbf{p} \cdot \mathbf{x}}\,(2\omega_{p})^{1/2}.
\end{align*}  
 Let a free massless scalar field  be in a state $\Phi(\mathbf{p})$ at time $t=0$. Then, the probability amplitude of finding a particle at the position $\mathbf{x}$ is given by, \cite[Chapter 3, Equation 44]{Sch}
\begin{align*} 
\langle \overline{\Psi}_{\mathbf{x}}, \Phi\rangle=\int d^n\mu(\mathbf{p})\, \Psi_{\mathbf{x},0}(\mathbf{p}) \Phi(\mathbf{p}).
\end{align*}  
In order to extend  this quantity to $k$-particles  to calculate  the probability amplitude for finding   $k$-particle at positions $\mathbf{x}_{1} \cdots \mathbf{x}_{k}$ at time $x_{0}={x}_{10}= \cdots={x}_{k0}$, the following operator is introduced,  \cite[Chapter 7, Equation 99]{Sch},
 \begin{align}\label{copes}
 \phi_1( {x}) =\int d^n\mu(\mathbf{p}) \overline{\Psi_{\mathbf{x},x_{0}}(\mathbf{p})} a_c(\mathbf{p}) =(2\pi)^{-n/2}\,
 \int d^n \mathbf{p} \, e^{-i {p} \cdot {x}}\,  a (\mathbf{p}),
 \end{align}
 and we apply $k$-times $\phi_1$  on $\vert\Phi^1  \rangle$ and the vacuum as  follows,
 \begin{align*} 
 \Phi^1( x_1 ,\cdots, x_k ) = (k!)^{-1/2} \langle 0\vert \phi_1(x_{1})\cdots\phi_1(x_{k})
 \vert\Phi^1  \rangle.
 \end{align*}
  It is important to point out that the operator $\phi_1(x)$ at time $x^{0}=0$ is nothing else than the Fourier-transformed annihilation operator, see Equation (\ref{inf}). Moreover, 	the  Newton-Wigner-Pryce operator has the following coordinate space representation (for proof see \cite[Lemma 3.1]{Muc7}),
  \begin{align*}
  X_j&= \int d^n \mathbf{x}\,  x_j \, \tilde{a}^*(\textbf{x}) \tilde{a}(\textbf{x})
  .
  \end{align*} 
\subsection{Constantly Used Integrals}
To make this work self-contained, we give in this section a general formula for certain Fourier-transformed functions. In particular, these formulas can be found in \cite[Chapter III, Section 2.6-2.8]{GS1},
\begin{align}\label{f1}
\tilde{P}^{\lambda}&= \int\,d^n\mathbf{p}\, \vert\vec{p}\vert^{2\lambda} \exp({-i\,p_k z^k})\\\nonumber
&= 2^{2\lambda+n}\pi^{\frac{1}{2}n}\,\frac{\Gamma(\lambda+\frac{1}{2}n)}{\Gamma(-\lambda)}\left(z_1^2+\cdots +z_n^2\right)^{-\lambda-\frac{1}{2}n},
\end{align}
where $\Gamma$ denotes the Gamma-function.
 
\section{Conformal Group in the  NWP-space}
In this section   we change the basis of the conformal group, which is given in terms of Fock-space operators in the momentum space,   to the coordinate space. The  base changes are mathematically well-defined  on the domain $\bigotimes_{i=1}^k \mathscr{S}(\mathbb{R}^n)$. This conclusion follows from the fact that we use  the  Fourier-transformation for the base change and this transformation acts as a linear automorphism on the Schwartz-space. Hence, we conclude that the base change is well-defined.
\\\\However, an exception to the rule is given by the Lorentz generators generating boosts. They  do not map the Schwartz-space to itself since   the coefficient functions of the differential operators are
not differentiable at $p=0$. Hence, if one intends to calculate the adjoint action w.r.t.  the Lorentz-boosts one has to choose the space of analytic functions on $S^2 \times \R$, more specifically of analytic sections, which is the light-cone without the tip, (see \cite[ Chapter 8.6]{ND}).\footnote[1]{I am indebted to Prof. N. Dragon for this remark.}
\\\\
From a  quantum field theoretical point of view,  the conformal group is of great interest. 
In particular, the group is defined as the set of all conformal transformations. The definition of a conformal transformation is  an 
invertible mapping $\mathbf{x}'\rightarrow \mathbf{x}$, leaving a
$d$-dimensional metric $g$  invariant. The invariance holds, modulo a scale factor, \cite{DMS}:
\begin{equation}  \label{cond1}
g'_{\mu\nu}(x')=F(x)g_{\mu\nu}(x).
\end{equation}
Conformal mappings can be essentially summarized as     Lorentz 
transformations,  translations,  dilations and  special 
conformal transformations. The generators of these transformations are given by the 
operators $M_{\mu\nu}$, $P_{\rho}$, $D$, $K_{\sigma}$.
\\\\
The conformal algebra is defined by the commutation relations of the 
generators and is given as follows:
\begin{equation}
[M_{\mu\nu},M_{\rho\sigma}]
=i\left(\eta_{\mu\sigma}M_{\nu\rho}+\eta_{\nu\rho}M_{\mu\sigma}-\eta_{\mu\rho}M_{\nu\sigma}-
\eta_{\nu\sigma}M_{\mu\rho},
\right)
\end{equation}
\begin{equation}
[P_{\rho},M_{\mu\nu}]=i\left(\eta_{\rho\mu}P_{\nu}-\eta_{\rho\nu}P_{\mu}\right),
\qquad 
[K_{\rho},M_{\mu\nu}]=i\left(\eta_{\rho\mu}K_{\nu}-\eta_{\rho\nu}K_{\mu}\right),
\end{equation}
\begin{equation}\label{f11}
[P_{\rho},D]=iP_{\rho}, \qquad [K_{\rho},D]=-iK_{\rho},
\end{equation}
\begin{equation}\label{f12}
[P_{\rho},K_{\mu}]= 2i\left(\eta_{\rho\mu}D-M_{\rho\mu}\right),
\end{equation}
where all other commutators are equal to zero.
\\\\ 
To transform  the momentum and Lorentz operators in to the NWP-space we can take the massive expressions (see \cite{Muc7})    and perform a massless limit, i.e. $m\rightarrow0$. However, in order to make this work self-contained we  calculate most expressions  explicitly.  
\begin{lemma}
	The zero component in the  massless case is given by,
	\begin{align*}  P_{0}
	=  
	\int
	d^n \mathbf{x}\,
	\tilde{a}^*(\mathbf{x})  \left(\tilde{\omega} \ast
	\tilde{a}  \right)(\mathbf{x} ),
	\end{align*} 
	with   function $\tilde{\omega}(x):=-\pi^{-\frac{n+1}{2}}\Gamma(\frac{n+1}{2})  |\mathbf{x} |^{- ( n+1)}$ and $\ast$ denoting the convolution.\\\\
	The massless spatial momentum operator   takes  the same form in   coordinate space as in the massive case, 
	\begin{align*} 
	P_j&=
	- i
	\int
	d^n \mathbf{x}\,\tilde{a}^*(\mathbf{x})  
	\frac{\partial}{\partial x^{j}}\tilde{a} (\mathbf{x} ).
	\end{align*}
\end{lemma}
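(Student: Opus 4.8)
The plan is to start from the Fock-space representation (\ref{mop}) of the relativistic momentum operator, specialize to the massless dispersion $\omega_{\mathbf{p}}=|\mathbf{p}|$, and transport everything to the NWP-space by means of the Fourier transformation (\ref{inf}). Concretely, I would substitute the inverse representations $a(\mathbf{p})=(2\pi)^{-n/2}\int d^n\mathbf{y}\,e^{ip_ky^k}\tilde{a}(\mathbf{y})$ and the conjugate one for $a^{*}(\mathbf{p})$ into $P_{\mu}=\int d^n\mathbf{p}\,p_{\mu}\,a^{*}(\mathbf{p})a(\mathbf{p})$. This turns $P_\mu$ into a double $\mathbf{x},\mathbf{y}$-integral over $\tilde{a}^{*}(\mathbf{x})\tilde{a}(\mathbf{y})$ with an inner $\mathbf{p}$-integral carrying the symbol $p_{\mu}$ against the phase $e^{ip_k(y^k-x^k)}$. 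All of this is legitimate on $\bigotimes_{i=1}^{k}\mathscr{S}(\mathbb{R}^n)$, where the Fourier transformation is an automorphism, so the reordering of integrals is justified.

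For the spatial components the inner integral is elementary. Writing $p_j\,e^{ip_ky^k}=-i\,\partial_{y^j}e^{ip_ky^k}$ trades the momentum symbol for a spatial derivative; a single integration by parts in $\mathbf{y}$ moves that derivative onto $\tilde{a}(\mathbf{y})$, and the leftover $\mathbf{p}$-integral produces $(2\pi)^{n}\delta^n(\mathbf{x}-\mathbf{y})$. Collapsing the delta then leaves the single-integral expression $-i\int d^n\mathbf{x}\,\tilde{a}^{*}(\mathbf{x})\,\partial_{x^j}\tilde{a}(\mathbf{x})$ asserted in the lemma, which is identical to the massive form — the expected outcome, since the ladder operators carry no explicit mass dependence and only the symbol $p_j$ (not $\omega_{\mathbf{p}}$) enters.

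The zero component is the substantive case, since the symbol is now $\omega_{\mathbf{p}}=|\mathbf{p}|$, which cannot be written as a derivative of the phase. Instead I would keep the $\mathbf{p}$-integral intact and recognize it as the Fourier transform of $|\mathbf{p}|$, namely $\int d^n\mathbf{p}\,|\mathbf{p}|\,e^{i\mathbf{p}\cdot(\mathbf{x}-\mathbf{y})}$, which is precisely formula (\ref{f1}) evaluated at $\lambda=\tfrac12$. Reading off (\ref{f1}) yields a kernel proportional to $|\mathbf{x}-\mathbf{y}|^{-(n+1)}$, and I would then simplify the prefactor $2^{2\lambda+n}\pi^{n/2}\Gamma(\lambda+\tfrac{n}{2})/\Gamma(-\lambda)$ at $\lambda=\tfrac12$: using $\Gamma(-\tfrac12)=-2\sqrt{\pi}$ together with the overall $(2\pi)^{-n}$, the constant collapses to exactly $-\pi^{-(n+1)/2}\Gamma(\tfrac{n+1}{2})$. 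The remaining double integral is then manifestly $\int d^n\mathbf{x}\,\tilde{a}^{*}(\mathbf{x})(\tilde{\omega}\ast\tilde{a})(\mathbf{x})$ with the stated kernel $\tilde{\omega}$, completing the identification.

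The main obstacle is analytic rather than algebraic: the Fourier transform of $|\mathbf{p}|$ does not exist as an ordinary integral, since the integrand is not integrable at infinity, so formula (\ref{f1}) must be understood as a statement about tempered distributions obtained by analytic continuation in $\lambda$. Indeed, the pole structure of $\Gamma(-\lambda)$ is exactly what renders the massless value $\lambda=\tfrac12$ finite and nonzero. I would therefore take care to phrase every step as an identity of operator-valued distributions acting on $\bigotimes_{i=1}^{k}\mathscr{S}(\mathbb{R}^n)$, where convolution with the homogeneous distribution $\tilde{\omega}$ is well-defined, so that the interchange of integrals, the integration by parts, and the delta-collapse are all controlled. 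By contrast, the spatial components need none of this care and follow purely from the derivative-and-delta argument above.
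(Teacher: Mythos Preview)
Your proposal is correct and follows essentially the same route as the paper: substitute the Fourier representations of the ladder operators into $P_{\mu}=\int d^n\mathbf{p}\,p_{\mu}\,a^{*}(\mathbf{p})a(\mathbf{p})$, evaluate the resulting $\mathbf{p}$-integral for $P_0$ via formula~(\ref{f1}) at $\lambda=\tfrac12$, and handle $P_j$ by the derivative-and-delta trick (the paper simply cites the massive result for this part, while you spell it out). Your additional care with the distributional interpretation of the Fourier transform of $|\mathbf{p}|$ and the explicit reduction of the prefactor are welcome but do not alter the argument.
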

\begin{proof}
	For the proof we   take the expression of the massless energy operator (see Equation (\ref{mop})) and transform, by explicit Fourier-transformation, into the coordinate space,
	\begin{align*} 
	P_{0}&=  \int
	d^n \mathbf{p}\,   \omega_{\mathbf{p}} \, {a}^*(\mathbf{p}) {a}(\mathbf{p})\\&=(2\pi)^{-n} 
	\int
	d^n \mathbf{p}\,  |\mathbf{p}| \int
	d^n \mathbf{x}\, e^{ip_kx^k} \tilde{a}^*(\mathbf{x}) \int
	d^n \mathbf{y} \,e^{-ip_ly^l} \tilde{a} (\mathbf{y} ) \\&=(2\pi)^{-n} 
	\iint
	d^n \mathbf{x}\,d^n \mathbf{y}\, \left( \int
	d^n \mathbf{p}\, {|\mathbf{p}| } \, e^{ip_k(x-y)^k} \right)\tilde{a}^*(\mathbf{x})  
	\tilde{a} (\mathbf{y} )	 \\ &= -\pi^{-\frac{n+1}{2}}\Gamma(\frac{n+1}{2})
	\iint
	d^n \mathbf{x}\,d^n \mathbf{y}\, 
	\frac{1}{|\mathbf{x}-\mathbf{y}|^{ {n+1}} } 
	\tilde{a}^*(\mathbf{x})  
	\tilde{a} (\mathbf{y} ),
	\end{align*}
	where the Fourier-transformation can be found in \cite[Chapter III, Section 2.6]{GS1} or see Equation (\ref{f1}). The spatial momentum operator does not depend explicitly on the mass term, hence its representation in the massless  is equivalent to the massive case, (see \cite{Muc7}).
\end{proof}
Other important expressions in the NWP-context are the velocity and the particle number operator. One way to calculate the velocity operator is  by using the Heisenberg equation of motion, 
\begin{align}\label{heq}
[P_{0},X_{j}]=-iV_j.
\end{align}
Since, we represented the relativistic energy and the NWP-operator in coordinate space, the calculation of the commutator gives us the velocity directly in coordinate space. In order to represent the particle number operator in the NWP-space we apply a Fourier-transformation on the expression in momentum space (see Equation (\ref{pn})).
\begin{lemma}
	The velocity operator is given, in the massless case, as follows
	\begin{align}
	V_j= -i\int d^n \mathbf{x} \,  
	\tilde{a}^*(\mathbf{x})  \left(\tilde{\omega} _j\ast
	\tilde{a}\right) (\mathbf{x} ),
	\end{align}
	with function $\tilde{\omega} _j(\mathbf{x}):=-  \pi^{-\frac{n+1}{2}}\Gamma(\frac{n+1}{2}) \, |\mathbf{x} |^{- ( n+1)}\,x_j$ and $\ast$ denoting the convolution. \\\\ Moreover, the particle number operator  $N$ is in the coordinate space given by,
	\begin{align}
	N= \int d^n \mathbf{x} \,  
	\tilde{a}^*(\mathbf{x})  
	\tilde{a}  (\mathbf{x} ).
	\end{align}

\end{lemma}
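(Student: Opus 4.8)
The plan is to obtain the velocity operator from the Heisenberg equation of motion (\ref{heq}) by computing the commutator $[P_0,X_j]$ directly in coordinate space, using the representations of $P_0$ and $X_j$ already established above, and to obtain the number operator $N$ by Fourier-transforming its momentum-space form (\ref{pn}).

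For the velocity, I would first write both operators as bilinear expressions with integral kernels,
\[
P_0 = \iint d^n\mathbf{x}\,d^n\mathbf{y}\,\tilde{\omega}(\mathbf{x}-\mathbf{y})\,\tilde{a}^*(\mathbf{x})\tilde{a}(\mathbf{y}),\qquad
X_j = \iint d^n\mathbf{x}\,d^n\mathbf{y}\,x_j\,\delta^n(\mathbf{x}-\mathbf{y})\,\tilde{a}^*(\mathbf{x})\tilde{a}(\mathbf{y}).
\]
Invoking the coordinate-space commutation relation $[\tilde{a}(\mathbf{x}),\tilde{a}^*(\mathbf{y})]=\delta^n(\mathbf{x}-\mathbf{y})$ from Definition \ref{sec2.2}, the commutator of two such bilinear operators with kernels $f$ and $g$ collapses (the normal-ordered quartic contributions cancel) to a single bilinear operator with kernel $f\circ g-g\circ f$, where $(f\circ g)(\mathbf{x},\mathbf{w})=\int d^n\mathbf{y}\,f(\mathbf{x},\mathbf{y})g(\mathbf{y},\mathbf{w})$. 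The delta-kernel of $X_j$ trivialises one integration and leaves
\[
[P_0,X_j] = \iint d^n\mathbf{x}\,d^n\mathbf{w}\,(w_j-x_j)\,\tilde{\omega}(\mathbf{x}-\mathbf{w})\,\tilde{a}^*(\mathbf{x})\tilde{a}(\mathbf{w}).
\]

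The key step is then to read off, from the explicit forms of $\tilde{\omega}$ and $\tilde{\omega}_j$, the kernel identity $\tilde{\omega}_j(\mathbf{u})=u_j\,\tilde{\omega}(\mathbf{u})$; setting $\mathbf{u}=\mathbf{x}-\mathbf{w}$ gives $(w_j-x_j)\tilde{\omega}(\mathbf{u})=-u_j\,\tilde{\omega}(\mathbf{u})=-\tilde{\omega}_j(\mathbf{u})$. Hence $[P_0,X_j]=-\int d^n\mathbf{x}\,\tilde{a}^*(\mathbf{x})(\tilde{\omega}_j\ast\tilde{a})(\mathbf{x})$, and comparison with (\ref{heq}) yields $V_j=-i\int d^n\mathbf{x}\,\tilde{a}^*(\mathbf{x})(\tilde{\omega}_j\ast\tilde{a})(\mathbf{x})$, as claimed. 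Alternatively one could compute the single-particle velocity $p_j/|\mathbf{p}|$ in momentum space and Fourier-transform it using (\ref{f1}) together with a derivative in $z$, but the Heisenberg route avoids evaluating that transform explicitly.

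For the number operator I would insert the Fourier representations of $a(\mathbf{p})$ and $a^*(\mathbf{p})$ from Definition \ref{sec2.2} into (\ref{pn}), exchange the order of integration, and apply $\int d^n\mathbf{p}\,e^{-ip_k(x-y)^k}=(2\pi)^n\delta^n(\mathbf{x}-\mathbf{y})$ to collapse one spatial integral, giving $N=\int d^n\mathbf{x}\,\tilde{a}^*(\mathbf{x})\tilde{a}(\mathbf{x})$ at once. I expect the only genuine obstacle to be bookkeeping: fixing the sign conventions in the commutator-of-bilinears reduction and matching the resulting kernel against the definition of $\tilde{\omega}_j$ without introducing a spurious sign. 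The number-operator claim is essentially immediate once the Fourier representations are inserted.
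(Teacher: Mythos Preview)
Your proposal is correct and follows essentially the same route as the paper: compute $[P_0,X_j]$ directly in coordinate space using the bilinear representations and the CCR to obtain the kernel $-(x-y)_j\,\tilde{\omega}(\mathbf{x}-\mathbf{y})=-\tilde{\omega}_j(\mathbf{x}-\mathbf{y})$, then read off $V_j$ from (\ref{heq}); and Fourier-transform (\ref{pn}) for $N$. The only cosmetic difference is that you phrase the commutator reduction via a general kernel-composition identity, whereas the paper simply expands $[\tilde{a}^*(\mathbf{x})\tilde{a}(\mathbf{y}),\tilde{a}^*(\mathbf{z})\tilde{a}(\mathbf{z})]$ explicitly.
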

\begin{proof}
We calculate the velocity by taking the commutator of the spatial coordinate space and the zero component of the momentum (see Heisenberg-Equation (\ref{heq})), which is  given in the former theorem, i.e. 
	\begin{align*}
	[P_0,X_j]&=\iint d^n \mathbf{x} \, d^n \mathbf{z} \,  z_j\, [
	\tilde{a}^*(\mathbf{x}) \left(\tilde{\omega}\ast
	\tilde{a}\right) (\mathbf{x} ),\tilde{a}^*(\mathbf{z}) \tilde{a} (\mathbf{z}) ]\\&=
	\iiint d^n \mathbf{x} \, d^n \mathbf{y}\,d^n \mathbf{z}\,z_j  \, \tilde{\omega}(\mathbf{x}-\mathbf{y})\underbrace{[
		\tilde{a}^*(\mathbf{x})  
		\tilde{a} (\mathbf{y} ),\tilde{a}^*(\mathbf{z}) \tilde{a} (\mathbf{z}) ]}_{-\delta(\mathbf{x}-\mathbf{z}) \tilde{a}^*(\mathbf{z})\tilde{a} (\mathbf{y})+\delta(\mathbf{y}-\mathbf{z}) \tilde{a}^*(\mathbf{x})\tilde{a} (\mathbf{z})}\\&=
	-\iint d^n \mathbf{x} \, d^n \mathbf{y}\,  \,(x-y)_j  \, \tilde{\omega}(\mathbf{x}-\mathbf{y})  \tilde{a}^*(\mathbf{x}) \tilde{a}(\mathbf{y}).
	\end{align*}
	For the particle number operator we use the Fourier-transformation as in the proof of the former lemma,
		\begin{align*} 
	N&=  \int
		d^n \mathbf{p}\,     {a}^*(\mathbf{p}) {a}(\mathbf{p})\\&=(2\pi)^{-n} 
		\int
		d^n \mathbf{p}\,    \int
		d^n \mathbf{x}\, e^{ip_kx^k} \tilde{a}^*(\mathbf{x}) \int
		d^n \mathbf{y} \,e^{-ip_ly^l} \tilde{a} (\mathbf{y} ) \\&=(2\pi)^{-n} 
		\iint
		d^n \mathbf{x}\,d^n \mathbf{y}\,\underbrace{ \left( \int
		d^n \mathbf{p}\,  e^{ip_k(x-y)^k} \right)}_{(2\pi)^{n}\delta(\mathbf{x}-\mathbf{y}) }\tilde{a}^*(\mathbf{x})  
		\tilde{a} (\mathbf{y} ) .
	\end{align*}
\end{proof}
Next, we turn our attention to the expressions of  Lorentz generators.  They are given as in the massive case by (see \cite[Equation
3.54]{IZ} and in this context see also \cite[Appendix]{SS}),
\begin{align}\label{lbcaop1}
M_{j0}&= i\int  d^n\mathbf{p}\,{a}^{*}(\textbf{p}) 
\left(  \frac{p_j}{2\omega_{\textbf{p}}}-\omega_{\textbf{p}}\frac{\partial}{\partial p^j } \right) a(\textbf{p}),
\\ \label{lbcaop2}
M_{ik}&=i \int d^n\mathbf{p}\,  {a}^{*}(\textbf{p}) 
\left(p_i \frac{\partial}{\partial p^k }-p_k\frac{\partial}{\partial p^i }\right)
a(\textbf{p}).
\end{align}
	It is interesting and important to note that   operators of boost and rotations can be given by the second quantization (denoted by $d\Gamma(\cdot)$ see \cite[Chapter X.7]{RS2}) of symmetric or skew-symmetric products of the momentum and coordinate operator.
\begin{theorem}For the massless scalar field the generators of the proper orthochronous Lorentz-group $\mathscr{L}^{\uparrow}_{+}$  can be written in terms of the second quantization of products of the NWP-operator with relativistic four-momentum as,
	\begin{align}\label{p1}
	M_{0j}=\frac{1}{2}\big( d\Gamma(X_jP_0)+d\Gamma(P_0X_j)\big)  ,
\qquad 
		M_{ik} =   d\Gamma(X_iP_k)-d\Gamma(X_{k}P_i).
		\end{align}
	\end{theorem}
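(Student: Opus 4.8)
The plan is to reduce both Fock-space identities to identities between one-particle operators and then apply second quantization. The key structural input is that $d\Gamma$ is linear, so that $\tfrac12\big(d\Gamma(X_jP_0)+d\Gamma(P_0X_j)\big)=d\Gamma\!\big(\tfrac12(X_jP_0+P_0X_j)\big)$ and $d\Gamma(X_iP_k)-d\Gamma(X_kP_i)=d\Gamma(X_iP_k-X_kP_i)$, while by construction $d\Gamma$ of a one-particle operator reproduces the integral expressions (\ref{lbcaop1})--(\ref{lbcaop2}). Hence it suffices to verify the two identities on one-particle wave functions, where $X_j$ acts as $-i\,\partial/\partial p^j$ (this is the representation $X_j=-i\int d^n\mathbf p\,a^*\partial_{p^j}a$ recorded above, in which the $p_j/2\omega_{\mathbf p}^2$ term of the covariant formula (\ref{NWP}) has been absorbed into the non-covariant normalization $a=a_c/\sqrt{2\omega_{\mathbf p}}$), $P_0$ acts by multiplication with $\omega_{\mathbf p}=|\mathbf p|$, and $P_k$ acts by multiplication with $p_k$.

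For the rotations I would simply compose the one-particle operators. Acting on a one-particle function, $X_iP_k=-i\,\partial_{p^i}(p_k\,\cdot\,)$ produces a Leibniz term proportional to $\delta_{ik}$ together with $-ip_k\partial_{p^i}$; the Leibniz term is symmetric in $i,k$ and therefore cancels in the antisymmetric combination $X_iP_k-X_kP_i$, leaving $i\,(p_i\partial_{p^k}-p_k\partial_{p^i})$, which is exactly the one-particle kernel of (\ref{lbcaop2}). Second quantizing gives $M_{ik}$.

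For the boosts the symmetric ordering is the crucial ingredient. Writing out $\tfrac12(X_jP_0+P_0X_j)$ on a one-particle function, the piece in which the derivative in $X_j=-i\partial_{p^j}$ hits the multiplication operator $P_0=|\mathbf p|$ contributes $-\tfrac{i}{2}(\partial_{p^j}|\mathbf p|)=-\tfrac{i}{2}\,p^j/|\mathbf p|$, while the remaining pieces combine into $-i\,\omega_{\mathbf p}\partial_{p^j}$. Using $p_j=-p^j$ this is, up to the sign convention relating $M_{0j}$ and $M_{j0}$, precisely the kernel $i\big(p_j/(2\omega_{\mathbf p})-\omega_{\mathbf p}\,\partial_{p^j}\big)$ appearing in (\ref{lbcaop1}). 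The symmetrization is exactly what halves the Leibniz term into the normalization coefficient $p_j/2\omega_{\mathbf p}$ and at the same time renders the operator Hermitian, which is necessary because $[X_j,P_0]$ is, up to a factor, the nonvanishing velocity operator $V_j$ computed in the previous lemma. Applying $d\Gamma$ then yields the boost generator $M_{0j}$.

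The main obstacle is analytic rather than algebraic. Neither $P_0=|\mathbf p|$ nor the resulting boost generator is smooth at $\mathbf p=0$, so they do not map $\mathscr S(\mathbb R^n)$ into itself, and the formal steps above (composition of the unbounded operators, the Leibniz differentiation of $|\mathbf p|$, and the identification of $\int a^*(A\,a)$ with $d\Gamma(A)$, whose justification involves an integration by parts) must be carried out on a domain where they are legitimate. Following the remark made earlier in the text for the Lorentz boosts, I would take the space of analytic sections over the light cone $S^{n-1}\times\mathbb R$ (the cone with its tip removed) as the common invariant domain; there $\partial_{p^j}|\mathbf p|$ is well defined and the integrations by parts carry no boundary contributions, so all the manipulations close. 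The rotation identity, by contrast, already holds on $\mathscr S(\mathbb R^n)$ since no singular coefficient appears.
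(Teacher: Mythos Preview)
Your proof is correct and follows essentially the same route as the paper: reduce to one-particle operators and compare with the kernels in (\ref{lbcaop1})--(\ref{lbcaop2}). The only cosmetic difference is that the paper rewrites the symmetric combination for the boosts as $\tfrac12[X_j,P_0]+P_0X_j$ and invokes the Heisenberg equation $[X_j,P_0]=iV_j$ to identify the first piece with the velocity term, whereas you obtain the same $p_j/2\omega_{\mathbf p}$ contribution directly from the Leibniz rule; for the rotations the paper simply declares the identity ``trivial,'' which is exactly your cancellation of the symmetric $\delta_{ik}$ terms.
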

	\begin{proof}
	For the boosts we have,
	\begin{align*}
	M_{0j}&=\frac{1}{2}\big( d\Gamma(X_jP_0)+d\Gamma(P_0X_j)\big) =\frac{1}{2}  d\Gamma([X_j,P_0])+d\Gamma(P_0X_j) 
	  \\  &=\frac{i}{2}  d\Gamma(V_j)+d\Gamma(P_0X_j) ,
	\end{align*}
	where in the last lines we used the Heisenberg equation of motion. By comparing the last line with Formula (\ref{lbcaop1}) the proof follows. 	 For rotations, the representation follows trivially. 
	\end{proof}
	The interesting fact about the former Theorem is that in principle one can define the generators of the Lorentz-group in a $k$-particle space by solely using the translation group and the NWP-operator. This resembles the proof of the Maxwell-equations by using the canonical commutation relations of Feynman (published by Dyson,  \cite{FD}). In particular, the existence of the NWP-operator and the translation group are sufficient to build the Poincar\'e group, i.e. the group responsible for the implications of special relativity.
	Hence, the commutation relations with the addition of relativistic energy can be used to implement relativistic principles. This fact, is, in our opinion, an additional  argument for the physical sense of the NWP-operator.   \\\\
Next, we turn our attention to the explicit expressions of  Lorentz generators in the coordinate space.
\begin{lemma}
	The boost generators of the Lorentz group expressed in the terms of ladder operators of the free massless scalar field are given in the coordinate space as
	\begin{align*}
	M_{0j}=\frac{1}{2}	\iint d^n \mathbf{x} \, d^n \mathbf{y}\,(x+y)_j\, \tilde{\omega}(\mathbf{x}-\mathbf{y})  \tilde{a}^*(\mathbf{x}) \tilde{a}(\mathbf{y})  .
	\end{align*}
	The operator of rotations takes  the, from a quantum mechanical point of view, familiar  and to the massive case equivalent form,  
	\begin{align*}
	M_{ik}=	 i \int
	d^n \mathbf{x}  \,
	\tilde{a}^*(\mathbf{x})
	\left(x_i \frac{\partial}{\partial x^k }-x_k\frac{\partial}{\partial x^i }\right) \tilde{a} (\mathbf{x}).
	\end{align*}
\end{lemma}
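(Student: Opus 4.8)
The plan is to handle the two generators by two complementary routes that exploit what has already been set up. For the boosts $M_{0j}$ I would not transform (\ref{lbcaop1}) afresh (that would drag in the singular integral $\tilde P^\lambda$ for the $p_j/2\omega_{\mathbf p}$ and $\omega_{\mathbf p}\partial_{p^j}$ pieces); instead I would start from the representation already proved in the preceding Theorem, $M_{0j}=\tfrac12\big(d\Gamma(X_jP_0)+d\Gamma(P_0X_j)\big)$, and read off its coordinate-space kernel. For the rotations $M_{ik}$ I would transform the momentum-space expression (\ref{lbcaop2}) directly into the NWP-space using the inverse Fourier representations of $a(\mathbf p)$ and $a^*(\mathbf p)$ from Definition \ref{sec2.2}.

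First, the boosts. In the NWP-space the one-particle operator $X_j$ acts as multiplication by $x_j$, while by the earlier Lemma $P_0$ acts as convolution with $\tilde{\omega}$, i.e. $(P_0\psi)(\mathbf x)=(\tilde{\omega}\ast\psi)(\mathbf x)$. Hence the composed one-particle operators $X_jP_0$ and $P_0X_j$ have integral kernels $x_j\,\tilde{\omega}(\mathbf x-\mathbf y)$ and $\tilde{\omega}(\mathbf x-\mathbf y)\,y_j$ respectively. Using linearity of second quantization, $d\Gamma(X_jP_0)+d\Gamma(P_0X_j)=d\Gamma(X_jP_0+P_0X_j)$ collapses the anticommutator to the symmetric kernel $(x+y)_j\,\tilde{\omega}(\mathbf x-\mathbf y)$; applying $d\Gamma$ (which sends a one-particle kernel $K(\mathbf x,\mathbf y)$ to $\iint K\,\tilde a^*\tilde a$) then yields the claimed formula immediately. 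This part is pure bookkeeping once the Theorem and the coordinate representations of $X_j$ and $P_0$ are granted.

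For the rotations I would substitute the inverse transforms into (\ref{lbcaop2}) and carry out the $\mathbf p$-integration. Here $\partial_{p^k}$ acting on $a(\mathbf p)$ pulls down a factor $x_k$, while the leftover multiplication by $p_i$ is converted into a coordinate derivative via $p_i\,e^{ip_lx^l}=-i\,\partial_{x^i}e^{ip_lx^l}$; performing the momentum integral then produces a derivative of $\delta(\mathbf x-\mathbf y)$, and one integration by parts in $\mathbf x$ reduces everything to a single coordinate integral. The one genuinely delicate point, which I expect to be the main obstacle, is that the coordinate derivative also strikes the prefactor $x_k$ and generates a contact term proportional to $\eta_{ik}$; this term is symmetric under $i\leftrightarrow k$ and is therefore killed by the antisymmetrisation already present in $p_i\partial_{p^k}-p_k\partial_{p^i}$. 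This cancellation is exactly what lets the rotation generator retain the same form in the NWP-space as in momentum space (hence the same form as in the massive case), so the rest is only careful tracking of $i$-factors and metric signs.

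Finally I would record the domain remark already flagged in the text: in contrast to the boosts, the coefficient functions of the rotation operator are smooth, so the manipulations above are legitimate on $\bigotimes_{i=1}^k\mathscr{S}(\mathbb{R}^n)$ and no passage to analytic sections on the light cone is needed.
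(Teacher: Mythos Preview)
Your argument is correct and in spirit follows the paper, but the execution differs in both halves and is in fact somewhat cleaner.

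For the boosts, the paper also starts from the preceding Theorem, but it splits the anticommutator as $\tfrac{i}{2}\,d\Gamma(V_j)+d\Gamma(P_0X_j)$ via the Heisenberg equation, takes the $\tfrac{i}{2}V_j$ piece from the already-computed velocity Lemma (giving the $(x-y)_j\,\tilde\omega$ part), and then computes $d\Gamma(P_0X_j)$ by a direct Fourier integral of $-i\!\int a^*\,\omega_{\mathbf p}\partial_{p^j}a$, obtaining the $y_j\,\tilde\omega$ kernel; the sum $\tfrac12(x-y)_j+y_j$ then gives $\tfrac12(x+y)_j$. Your kernel argument bypasses both the velocity detour and that Fourier computation: once $P_0$ is known to have kernel $\tilde\omega(\mathbf x-\mathbf y)$, the symmetrised product has kernel $(x+y)_j\,\tilde\omega(\mathbf x-\mathbf y)$ immediately. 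This is shorter and makes the structure of the formula transparent.

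For the rotations the difference is more marked: the paper does not redo the computation at all but simply invokes the massive case (\cite[Lemma 4.3]{Muc7}), noting that no mass term appears there. Your explicit Fourier calculation, including the observation that the $\eta_{ik}$ contact term is killed by the antisymmetrisation, is therefore a self-contained addition rather than a reproduction of the paper's proof. It buys independence from the external reference at the cost of a few extra lines.
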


\begin{proof}Since in the proof of \cite[Lemma 4.3]{Muc7}   we did not explicitly encounter the mass when we calculated the  representation of rotations in the coordinate space, we conclude that they have the same form.	Next, we turn to the proof of the representation of    Lorentz boosts in coordinate space. The first term is simply the massless velocity operator times $\frac{i}{2}$ (see Equation (\ref{p1})). Hence, we focus here only on the second part,
	\begin{align*}
	&- i\int  d^n\mathbf{p}\,{a}^{*}(\textbf{p}) \, \omega_{\mathbf{p}}\frac{\partial}{\partial p^j}  a(\textbf{p})\\&=
	- i\int  d^n\mathbf{p}\,\int
	d^n \mathbf{x}\, e^{-ip_rx^r} \,\tilde{a}^*(\mathbf{x}) \, \omega_{\textbf{p}}\frac{\partial}{\partial p^j}  \int
	d^n \mathbf{y}\, e^{ip_sy^s} \tilde{a}(\mathbf{y}) 
	\\&= (2\pi)^{-n}\iint	d^n \mathbf{x}\,	d^n \mathbf{y}
	\left( \int  d^n\mathbf{p}\, |\mathbf{p}| \, e^{-ip_k(x-y)^k}    \right)\,y_j\,\tilde{a}^*(\mathbf{x}) \tilde{a}(\mathbf{y}) 
	\\&=-\pi^{-\frac{n+1}{2}}\Gamma(\frac{n+1}{2})
	\iint
	d^n \mathbf{x}\,d^n \mathbf{y}\, y_j\, 
	\frac{1}{|\mathbf{x}-\mathbf{y}|^{ {n+1}} } 
	\tilde{a}^*(\mathbf{x})  
	\tilde{a} (\mathbf{y} )
	\\&= 
	\iint
	d^n \mathbf{x}\,d^n \mathbf{y}\, y_j\,\tilde{\omega}(\mathbf{x}-\mathbf{y})
	\tilde{a}^*(\mathbf{x})  
	\tilde{a} (\mathbf{y} )
	,
	\end{align*}  
\end{proof}
In order to translate the special conformal   and the dilatation operator  into the coordinate space, we  first write the momentum space representation, see \cite{SS}. Moreover, we restrict this part to the physical space-time dimension four. The dilatation operator   is given in momentum space  as 
\begin{align}
D=-i\int  d^3\mathbf{p}\,{a}^{*}(\textbf{p}) \left(\frac{3}{2}+p^{l}\frac{\partial}{\partial p^{l}}\right)
{a}(\textbf{p}),
\end{align}
and for the special conformal operators we have,
\begin{align}
K_{0}&=-\int  d^3\mathbf{p}\,{a}^{*}(\textbf{p}) \left(\frac{3}{4 \,\omega_{\mathbf{p}}}+\frac{p^{l}}{\omega_{\mathbf{p}}}\frac{\partial}{\partial p^{l}}-\omega_{\mathbf{p}}\frac{\partial}{\partial p^{l}}\frac{\partial}{\partial p_{l}}\right)
a(\textbf{p}),\\  
K_{j}&= -\int  d^3\mathbf{p}\,{a}^{*}(\textbf{p}) \left(
\frac{p_j}{4\,\omega_{\mathbf{p}}^2}+3 \frac{\partial}{\partial p^{j}}+2p^l\frac{\partial}{\partial p^{l}}\frac{\partial}{\partial p^{j}}-p_{j}\frac{\partial}{\partial p^{l}}\frac{\partial}{\partial p_{l}}\right)
a(\textbf{p}).
\end{align}
\begin{lemma}
	The four-dimensional dilatation operator is given in the coordinate representation as follows,
	\begin{align}
	D=i\int d^3\mathbf{x}\,\tilde{a}^*(\mathbf{x})\left(\frac{3}{2}+x^{l}\frac{\partial}{\partial x^{l}}
	\right)\tilde{a}(\mathbf{x}).
	\end{align}
\end{lemma}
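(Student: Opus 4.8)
The plan is to obtain the coordinate-space form of $D$ by exactly the direct Fourier-transformation strategy used in the preceding lemmas: substitute the explicit expressions for $a(\mathbf{p})$ and $a^*(\mathbf{p})$ from Definition \ref{sec2.2} into the momentum-space generator and then carry out the $\mathbf{p}$-integration. Concretely, I would insert $a(\mathbf{p})=(2\pi)^{-3/2}\int d^3\mathbf{x}\,e^{ip_kx^k}\tilde{a}(\mathbf{x})$ and $a^*(\mathbf{p})=(2\pi)^{-3/2}\int d^3\mathbf{y}\,e^{-ip_ky^k}\tilde{a}^*(\mathbf{y})$ into $D=-i\int d^3\mathbf{p}\,a^*(\mathbf{p})\bigl(\tfrac{3}{2}+p^l\partial_{p^l}\bigr)a(\mathbf{p})$. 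Here $n=3$ throughout, since the statement is restricted to the physical space-time dimension four.

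First I would let the differential operator act on the exponential. Since $\partial_{p^l}e^{ip_kx^k}=ix_l\,e^{ip_kx^k}$, one obtains $p^l\partial_{p^l}e^{ip_kx^k}=i\,(p_kx^k)\,e^{ip_kx^k}$, using that $p^l x_l=p_k x^k$ in the chosen Minkowski convention. Thus the operator $\bigl(\tfrac{3}{2}+p^l\partial_{p^l}\bigr)$ becomes the scalar factor $\bigl(\tfrac{3}{2}+i\,p_kx^k\bigr)$ inside the $\mathbf{x}$-integral. After combining the two exponentials into $e^{ip_k(x-y)^k}$, the $\mathbf{p}$-integration is elementary: the constant piece yields $(2\pi)^3\delta^3(\mathbf{x}-\mathbf{y})$, while the piece carrying $i\,p_kx^k$ is rewritten, via $i p_k\,e^{ip_m(x-y)^m}=\partial_{x^k}e^{ip_m(x-y)^m}$, as $x^k\partial_{x^k}\bigl[(2\pi)^3\delta^3(\mathbf{x}-\mathbf{y})\bigr]$.

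The decisive step is then an integration by parts in $\mathbf{x}$ to move $\partial_{x^k}$ off the delta distribution and onto $\tilde{a}(\mathbf{x})$. Because $\partial_{x^k}\bigl(x^k\tilde{a}(\mathbf{x})\bigr)=3\,\tilde{a}(\mathbf{x})+x^k\partial_{x^k}\tilde{a}(\mathbf{x})$, this simultaneously produces the desired transport term $x^k\partial_{x^k}\tilde{a}$ and an extra constant $3\tilde{a}$. Integrating out $\mathbf{y}$ against the delta and combining with the $-i\tfrac{3}{2}$ contribution of the constant piece, the constant terms assemble as $-\tfrac{3}{2}+3=\tfrac{3}{2}$, and the overall prefactor changes from $-i$ to $+i$, yielding exactly $D=i\int d^3\mathbf{x}\,\tilde{a}^*(\mathbf{x})\bigl(\tfrac{3}{2}+x^l\partial_{x^l}\bigr)\tilde{a}(\mathbf{x})$.

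I expect the main obstacle to be bookkeeping rather than analysis. One must track the index convention carefully (spatial indices give $p_kx^k=-\vec{p}\cdot\vec{x}$ and $p^lx_l=p_kx^k$), respect the non-commutativity of multiplication by $x^k$ with $\partial_{x^k}$ in the integration by parts, and correctly tally the three constant contributions so that the surviving shift is $+\tfrac{3}{2}$ with the right sign. The distributional manipulations (the derivative of $\delta^3$ and the boundary-term-free integration by parts) are justified on the domain $\bigotimes_{i=1}^k\mathscr{S}(\mathbb{R}^n)$, on which the Fourier transform acts as an automorphism, so no extra functional-analytic input is required. As a consistency check, the Euler scaling operator is, up to the additive dimension constant, its own Fourier conjugate with a reversed sign, which is precisely the mechanism converting $\tfrac{3}{2}+p^l\partial_{p^l}$ into $-\bigl(\tfrac{3}{2}+x^l\partial_{x^l}\bigr)$ and thereby accounting for the flip of the prefactor from $-i$ to $+i$.
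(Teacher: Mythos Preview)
Your proposal is correct and follows essentially the same direct Fourier-transformation approach as the paper: insert the expressions from Definition~\ref{sec2.2}, let the Euler operator act on the exponential, and integrate by parts to transfer the derivative onto $\tilde a$. The only cosmetic difference is the order of operations---the paper uses the identity $p^{l}\partial_{p^{l}}e^{ip_ky^k}=y^{l}\partial_{y^{l}}e^{ip_ky^k}$ and integrates by parts in $y$ \emph{before} doing the $\mathbf{p}$-integration, whereas you first perform the $\mathbf{p}$-integral to produce $\partial_{x^k}\delta^3(\mathbf{x}-\mathbf{y})$ and then integrate by parts; the bookkeeping of the constant $-\tfrac{3}{2}+3=\tfrac{3}{2}$ and the sign flip $-i\to +i$ is identical in both.
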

\begin{proof}
	The Fourier-transformation is straight-forward in this case, e.g.
	\begin{align*}
	D&=-i	\int  d^3\mathbf{p}\,{a}^{*}(\textbf{p}) \left(\frac{3}{2}+p^{l}\frac{\partial}{\partial p^{l}}\right)
	{a}(\textbf{p})
	\\&=-i(2\pi)^{-3} 
	\int
	d^3 \mathbf{p}\,    \int
	d^3 \mathbf{x}\, e^{-ip_rx^r} \tilde{a}^*(\mathbf{x}) \left(\frac{3}{2}+p^{l}\frac{\partial}{\partial p^{l}}\right)\int
	d^3 \mathbf{y} \,e^{ ip_ky^k} \tilde{a} (\mathbf{y} ) 	\\&=-
	\frac{3i}{2}\int d^3\mathbf{x}\,\tilde{a}^*(\mathbf{x})\tilde{a}(\mathbf{x})
	-i(2\pi)^{-3} 
	\iint
	d^3 \mathbf{p}\,   
	d^3 \mathbf{x}\, e^{-ip_rx^r} \tilde{a}^*(\mathbf{x})  \int
	d^3 \mathbf{y} \,\left(y^{l}\frac{\partial}{\partial y^{l}}e^{ ip_ky^k}\right) \tilde{a} (\mathbf{y} ) \\&=
	 	\frac{3i}{2}\int d^3\mathbf{x}\,\tilde{a}^*(\mathbf{x})\tilde{a}(\mathbf{x})
	+i(2\pi)^{-3} 
	\iint
	d^3 \mathbf{p}\,  
	d^3 \mathbf{x}\, e^{-ip_rx^r} \tilde{a}^*(\mathbf{x})  \int
	d^3 \mathbf{y} \,e^{ ip_ky^k} y^{l}\frac{\partial}{\partial y^{l}}\tilde{a} (\mathbf{y} ) \\&=
	i	\int d^3\mathbf{x}\,\tilde{a}^*(\mathbf{x})\left(\frac{3}{2}+ x^{l}\frac{\partial}{\partial x^{l}}\right)\tilde{a}(\mathbf{x})  .
	\end{align*}
\end{proof}
The outcome of the dilatation operator is interesting. In essence, it is equivalent to its momentum representation modulo a sign. This is what one might expect, since this operator performs dilatations in a complimentary manner, with regard to the momentum and coordinate space. \\\\
Next, we transform the special conformal operators into the coordinate space. From a calculational point of view these operators are the most difficult ones. 
\begin{lemma}
	The zero component of the  special conformal operators has the following form in coordinate space 
	\begin{align*}
	K_{0}&=  \pi^{-2}	\iint d^3 \mathbf{x}\,d^3 \mathbf{y} \,\tilde{a}^*(\mathbf{x}) \,\vert \mathbf{x}-\mathbf{y}\vert^{-2} \left( \frac{9}{8}-
	\vert \mathbf{x}-\mathbf{y}\vert^{-2}\vert\mathbf{y}\vert^2
	+ \frac{y^{l}}{2} \frac{\partial}{\partial y^{l}}
	\right) \tilde{a} (\mathbf{y} ).
	\end{align*}

\end{lemma}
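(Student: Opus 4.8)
The plan is to mirror the Fourier-transformation scheme already used for $P_0$ and for the dilatation operator: I split the momentum-space expression for $K_0$ into its three summands, insert the plane-wave representations $a(\mathbf{p})=(2\pi)^{-3/2}\int d^3\mathbf{y}\, e^{ip_k y^k}\tilde{a}(\mathbf{y})$ and $a^*(\mathbf{p})=(2\pi)^{-3/2}\int d^3\mathbf{x}\, e^{-ip_k x^k}\tilde{a}^*(\mathbf{x})$ from Definition \ref{sec2.2}, and reduce every resulting $\mathbf{p}$-integral to the master formula \eqref{f1}. Only two exponents occur: $\lambda=-\tfrac12$, which produces $\int d^3\mathbf{p}\,|\mathbf{p}|^{-1}e^{ip_k(y-x)^k}=4\pi\,|\mathbf{x}-\mathbf{y}|^{-2}$, and $\lambda=\tfrac12$, which produces $\int d^3\mathbf{p}\,|\mathbf{p}|\,e^{ip_k(y-x)^k}=-8\pi\,|\mathbf{x}-\mathbf{y}|^{-4}$. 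Together with the prefactor $(2\pi)^{-3}$ these two evaluations are the only analytic inputs the argument needs.

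The first summand $-\tfrac34\int d^3\mathbf{p}\,a^*(\mathbf{p})\,\omega_{\mathbf{p}}^{-1}a(\mathbf{p})$ is a pure multiplication operator and transforms directly, contributing $-\tfrac38\pi^{-2}\,|\mathbf{x}-\mathbf{y}|^{-2}$ to the kernel. The third summand $+\int d^3\mathbf{p}\,a^*(\mathbf{p})\,\omega_{\mathbf{p}}\,\partial_{p^l}\partial_{p_l}\,a(\mathbf{p})$ is handled by letting both momentum derivatives fall on the exponential inside $a(\mathbf{p})$; since $\partial_{p^l}\partial_{p_l}e^{ip_k y^k}=|\mathbf{y}|^2\,e^{ip_k y^k}$, this simply inserts the multiplicative factor $|\mathbf{y}|^2$, after which the leftover $|\mathbf{p}|$-integral is the $\lambda=\tfrac12$ case. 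This reproduces exactly the $-\pi^{-2}\,|\mathbf{x}-\mathbf{y}|^{-2}\,|\mathbf{x}-\mathbf{y}|^{-2}|\mathbf{y}|^2$ term and, importantly, generates neither a constant nor a first-order piece.

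The second summand $-\int d^3\mathbf{p}\,a^*(\mathbf{p})\,\omega_{\mathbf{p}}^{-1}p^l\partial_{p^l}\,a(\mathbf{p})$ is the only one requiring a manipulation. Here I use the Euler-operator identity $p^l\partial_{p^l}e^{ip_k y^k}=y^l\partial_{y^l}e^{ip_k y^k}$ already exploited in the dilatation proof, leaving the scalar factor $\omega_{\mathbf{p}}^{-1}=|\mathbf{p}|^{-1}$ untouched inside the $\mathbf{p}$-integral, and then integrate by parts in $\mathbf{y}$ via $\partial_{y^l}(y^l\,\tilde{a})=3\,\tilde{a}+y^l\partial_{y^l}\tilde{a}$. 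The surviving $|\mathbf{p}|^{-1}$-integral is again the $\lambda=-\tfrac12$ case, and the outcome is $\pi^{-2}\,|\mathbf{x}-\mathbf{y}|^{-2}\big(\tfrac32\,\tilde{a}(\mathbf{y})+\tfrac12\,y^l\partial_{y^l}\tilde{a}(\mathbf{y})\big)$, which supplies both the first-order operator $\tfrac{y^l}{2}\partial_{y^l}$ and a constant $\tfrac32$.

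Collecting the three contributions, the first-order and the $|\mathbf{y}|^2$ pieces are already in their final shape, while the constant coefficients add as $-\tfrac38+\tfrac32=\tfrac98$, which is exactly the factor appearing in the statement. I expect the main obstacle to be purely organisational rather than conceptual: carefully tracking the mostly-minus Minkowski index placement when converting the $\partial_{p^l}$-factors into multiplicative $\mathbf{y}$-factors and into the $\mathbf{y}$-divergence, and keeping the Gamma-function constants from \eqref{f1} exact so that the two constant contributions combine to $\tfrac98$ and not to some neighbouring value. Beyond the domain caveats already recorded for $\bigotimes_{i=1}^k \mathscr{S}(\mathbb{R}^n)$, no further convergence or self-adjointness issue is anticipated.
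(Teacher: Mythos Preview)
Your proposal is correct and follows essentially the same route as the paper: the paper also decomposes $K_0$ into the three summands $K_0^1+K_0^2+K_0^3$, treats the first as a pure multiplication operator, handles the second via the Euler identity $p^{l}\partial_{p^{l}}e^{ip_ky^k}=y^{l}\partial_{y^{l}}e^{ip_ky^k}$ followed by integration by parts in $\mathbf{y}$ to produce the factor $(3+y^{l}\partial_{y^{l}})$, and treats the third by letting both momentum derivatives hit the exponential to generate $|\mathbf{y}|^2$, each time closing with the Gelfand--Shilov formula \eqref{f1} at $\lambda=\pm\tfrac12$. Your bookkeeping of the constants ($-\tfrac38+\tfrac32=\tfrac98$) and of the two Fourier evaluations matches the paper's computation exactly.
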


\begin{proof}
	In order to make this calculation more readable we separate the zero component into three parts as follows,
	\begin{align*}
	K_{0}&=-\int  d^3\mathbf{p}\,{a}^{*}(\textbf{p}) \left(\frac{3}{4 \,\omega_{\mathbf{p}}}+\frac{p^{l}}{\omega_{\mathbf{p}}}\frac{\partial}{\partial p^{l}}-\omega_{\mathbf{p}}\frac{\partial}{\partial p^{l}}\frac{\partial}{\partial p_{l}}\right)
	a(\textbf{p})\\&=:
	K_{0}^1+K_{0}^2+K_{0}^3.
	\end{align*}
	Let us take a look at the first part,
	\begin{align*}
	K^1_{0}&=-\frac{3}{4}\int  d^3\mathbf{p}\,{a}^{*}(\textbf{p})   \,\omega_{\mathbf{p}}^{-1}
	a(\textbf{p})
	\\&=-\frac{3}{4}(2\pi)^{-3} 
	\iint d^3 \mathbf{x}\,d^3 \mathbf{y} \,\tilde{a}^*(\mathbf{x})
	\left( \int d^3 \mathbf{p}   
	\, e^{-ip_k(x-y)^k }  \,\omega_{\mathbf{p}}^{-1}\right)
	\tilde{a} (\mathbf{y} ) 
	\\&= -\frac{3}{8} \pi^{-2}	\iint d^3 \mathbf{x}\,d^3 \mathbf{y} \,\tilde{a}^*(\mathbf{x}) \,\vert \mathbf{x}-\mathbf{y}\vert^{-2}  \tilde{a} (\mathbf{y} ) .
	\end{align*}
	Next we turn our attention to the second term of the zero component of the special conformal operator, 
	\begin{align*}
	K^2_{0}&=-\int  d^3\mathbf{p}\,{a}^{*}(\textbf{p})   \,\frac{p^{l}}{\omega_{\mathbf{p}}}\frac{\partial}{\partial p^{l}}
	a(\textbf{p})
	\\&=- (2\pi)^{-3}\int  d^3\mathbf{p} \int
	d^3 \mathbf{x}\, e^{-ip_rx^r} \tilde{a}^*(\mathbf{x})
	\int
	d^3 \mathbf{y} \,\left(\frac{p^{l}}{\omega_{\mathbf{p}}}\frac{\partial}{\partial p^{l}}e^{ip_ky^k} \right)\tilde{a} (\mathbf{y} ) 
	\\&=  - (2\pi)^{-3} \int
	d^3 \mathbf{x}\,\tilde{a}^*(\mathbf{x})
	\int  d^3\mathbf{p}\,\frac{1}{\omega_{\mathbf{p}}} e^{-ip_rx^r}
	\int d^3 \mathbf{y} \,\left( y^{l}\frac{\partial}{\partial y^{l}}e^{ip_ky^k} \right)\tilde{a} (\mathbf{y} ) 
	\\&=     (2\pi)^{-3} \iint
	d^3 \mathbf{x}\,d^3 \mathbf{y}\,\tilde{a}^*(\mathbf{x})
	\left(\int  d^3\mathbf{p}\,\frac{1}{\omega_{\mathbf{p}}} e^{-ip_k(x-y)^k}\right)
	\, \left(3+y^{l} \frac{\partial}{\partial y^{l}}\right)\tilde{a} (\mathbf{y} )
	\\&=     (2)^{-1}  \pi^{-2} \iint
	d^3 \mathbf{x}\,d^3 \mathbf{y}\,\tilde{a}^*(\mathbf{x})
	\left( \vert \mathbf{x}-\mathbf{y}\vert^{-2}  \right)
	\, \left(3+y^{l} \frac{\partial}{\partial y^{l}}\right)\tilde{a} (\mathbf{y} ),
	\end{align*}
	where in the last lines we performed a partial integration and used Formula (\ref{f1}). The last term of the zero component is given by
	\begin{align*}
	K^3_{0}&=\int  d^3\mathbf{p}\,{a}^{*}(\textbf{p})   \, \omega_{\mathbf{p}}\frac{\partial}{\partial p^{l}}\frac{\partial}{\partial p_{l}}
	a(\textbf{p})\\&
	=(2\pi)^{-3}\int  d^3\mathbf{p} \int
	d^3 \mathbf{x}\, e^{-ip_rx^r} \tilde{a}^*(\mathbf{x})
	\int
	d^3 \mathbf{y} \,\left( 
	\omega_{\mathbf{p}}\,\frac{\partial}{\partial p^{l}}\frac{\partial}{\partial p_{l}}
	e^{ ip_ky^k} \right)\tilde{a} (\mathbf{y} ) 
	\\&=  (2\pi)^{-3}\iint
	d^3 \mathbf{x}\,	   d^3 \mathbf{y}\,\tilde{a}^*(\mathbf{x})\left(
	\int  d^3\mathbf{p}\, {\omega_{\mathbf{p}}}\, e^{-ip_k(x-y)^k}\right) \,\vert\mathbf{y}\vert^2  \tilde{a} (\mathbf{y} ) 
	\\&=  -( \pi)^{-2}\iint
	d^3 \mathbf{x}\,  d^3 \mathbf{y} \,\tilde{a}^*(\mathbf{x})\left(	\vert \mathbf{x}-\mathbf{y}\vert^{-4} 
	\right)	\vert\mathbf{y}\vert^2  \tilde{a} (\mathbf{y} ) , 
	\end{align*}
	where in the last lines we used the action of the differential operators w.r.t. $\mathbf{p}$ on the exponential function and used as before Equation (\ref{f1}).
	
\end{proof}
Next, we give the base change of the spatial part of the special conformal operator. 
\begin{lemma}
	The spatial components of the  special conformal operators have the following form in coordinate space,
	\begin{align}
	K_{j}  =  -\frac{i}{16}\pi^{-1}\iint&
	d^3 \mathbf{x}\,d^3 \mathbf{y}\,\tilde{a}^*(\mathbf{x}) 
	\vert \mathbf{x}-\mathbf{y}\vert^{-1} 
	\, \frac{\partial}{\partial y^{j}}\tilde{a} (\mathbf{y} )\\&\nonumber+
	i \int
	d^3 \mathbf{x} \,\tilde{a}^*(\mathbf{x})
	\left(3x_j+ 2x_{j}x^{l} \frac{\partial}{\partial x^{l}}- x_{l}x^{l} \frac{\partial}{\partial x^{j}} \right) \tilde{a} (\mathbf{x} )
	\end{align}
	
\end{lemma}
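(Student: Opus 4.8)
The plan is to imitate the proofs of the preceding lemmas and split the given momentum-space operator into the four summands
$$K_j = K_j^1 + K_j^2 + K_j^3 + K_j^4,$$
coming respectively from $\tfrac{p_j}{4\omega_{\mathbf p}^2}$, $3\,\partial/\partial p^j$, $2p^l\partial_{p^l}\partial_{p^j}$ and $-p_j\partial_{p^l}\partial_{p_l}$. Into each piece I would insert the inverse Fourier transforms (\ref{inf}) for both $a(\mathbf p)$ and $a^*(\mathbf p)$ and reduce it to an integral over the coordinate-space ladder operators. In every piece the momentum derivatives $\partial/\partial p$ are first made to act on the exponential $e^{ip_k y^k}$, where by bilinearity of $p_k y^k$ they can be traded for the coordinate operators $y$ (exactly as $p^l\partial_{p^l}$ was replaced by $y^l\partial_{y^l}$ in the dilatation lemma), after which a partial integration transfers the resulting derivatives onto $\tilde a(\mathbf y)$. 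The boundary terms vanish on the Schwartz domain $\bigotimes_{i=1}^k\mathscr S(\R^n)$, on which all these operators are defined.

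The single nonlocal contribution comes from $K_j^1$. Writing $p_j$ as a derivative of the exponential and pulling it outside the momentum integral leaves $\int d^3\mathbf p\,|\mathbf p|^{-2}e^{-ip_k(x-y)^k}$, which by Formula (\ref{f1}) with $n=3$ and $\lambda=-1$ equals $2\pi^2\,|\mathbf x-\mathbf y|^{-1}$. Transferring the $p_j$-derivative onto $\tilde a(\mathbf y)$ by one more partial integration then produces the kernel $|\mathbf x-\mathbf y|^{-1}\,\partial/\partial y^{j}$, and collecting the prefactors $-\tfrac14(2\pi)^{-3}\cdot 2\pi^2$ together with the factor $i$ arising from the derivative yields precisely the coefficient $-\tfrac{i}{16}\pi^{-1}$ of the first term in the statement. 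The remaining three pieces $K_j^2,K_j^3,K_j^4$ carry only polynomial momentum dependence, so once the derivatives have been converted the momentum integral $\int d^3\mathbf p\, e^{ip_k(x-y)^k}$ collapses to $(2\pi)^3\delta^3(\mathbf x-\mathbf y)$ and each becomes a local differential operator in $\mathbf x$; assembling them should give the second integral $i\int d^3\mathbf x\,\tilde a^*(\mathbf x)\big(3x_j+2x_jx^l\partial_{x^l}-x_lx^l\partial_{x^j}\big)\tilde a(\mathbf x)$.

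The main obstacle is the bookkeeping for $K_j^3$ and $K_j^4$. Each contains one momentum factor together with two momentum derivatives, and when the derivatives are moved onto the exponential there are contraction terms in which a derivative hits the accompanying momentum factor rather than passing through it. These contractions are responsible for the dimension-dependent constant $3=n$ in the term $3x_j$ and for the precise manner in which the two first-order pieces $2x_jx^l\partial_{x^l}$ and $-x_lx^l\partial_{x^j}$ are generated and combined, so one must carefully track the order of the operations and the index contractions. I would also explicitly verify that $K_j^2,K_j^3,K_j^4$ produce no further nonlocal kernels; this is guaranteed because their momentum symbols are polynomials whose Fourier transforms are derivatives of the delta distribution, leaving $K_j^1$ as the sole source of the $|\mathbf x-\mathbf y|^{-1}$ kernel. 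Once the coefficients are matched, the two stated terms follow.
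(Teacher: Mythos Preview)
Your proposal is correct and follows essentially the same route as the paper: split $K_j$ into the four momentum-space summands, use Formula~(\ref{f1}) with $\lambda=-1$ to turn the $\tfrac{p_j}{4\omega_{\mathbf p}^2}$ piece into the nonlocal $|\mathbf x-\mathbf y|^{-1}$ kernel (your coefficient computation $-\tfrac14(2\pi)^{-3}\cdot 2\pi^2\cdot i=-\tfrac{i}{16\pi}$ is exactly what the paper obtains), and reduce the polynomial pieces to local differential operators via partial integration. The only stylistic difference is that the paper immediately recognises the $3\,\partial/\partial p^j$ term as $-3iX_j$ and then handles your $K_j^3$ and $K_j^4$ in one stroke by computing the generic building block $O_j=\int d^3\mathbf p\,a^*(\mathbf p)\,p_r\,\partial_{p_l}\partial_{p^j}\,a(\mathbf p)$ with free indices $r,l$ and specialising afterwards; this packages the ``contraction'' bookkeeping you flag as the main obstacle into a single short calculation, so you may find it cleaner than treating $K_j^3$ and $K_j^4$ separately. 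Note in particular that the constant $3x_j$ in the result is not produced solely by the contractions in $K_j^3,K_j^4$ as you suggest, but arises from the combination of the explicit $3\,\partial/\partial p^j$ term with the trace contributions $\eta_{rj}x^l$ and $\eta_r^{\ l}x_j$ coming from $O_j$.
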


\begin{proof}
	Let us first recall and define the expression of the special conformal operators in the momentum space, i.e. 
	\begin{align*}
	K_{j}&= -\int  d^3\mathbf{p}\,{a}^{*}(\textbf{p}) \left(
	\frac{p_j}{4\,\omega_{\mathbf{p}}^2}+3 \frac{\partial}{\partial p^{j}}+2p_l\frac{\partial}{\partial p_{l}}\frac{\partial}{\partial p^{j}}-p_{j}\frac{\partial}{\partial p^{l}}\frac{\partial}{\partial p_{l}}\right)
	a(\textbf{p})\\&=
	K^{1}_{j}-3i	X_{j}+	K^{2}_{j}+	K^{3}_{j}.
	\end{align*}
	The second term in the spatial conformal operator is simply the coordinate operator times a constant. This fact was used in \cite{SE}. Hence, the remaining terms of interest are $K^{1}_{j},\,K^{2}_{j},\,K^{3}_{j}$. Let us start with first object, 
	\begin{align*}
	K^{1}_{j}&= -
	\frac{1}{4}
	\int  d^3\mathbf{p}\,{a}^{*}(\textbf{p}) \left(
	\frac{p_j}{\,\omega_{\mathbf{p}}^2} \right)
	a(\textbf{p})\\&= -
	\frac{1}{4}
	(2\pi)^{-3}\int  d^3\mathbf{p} \int
	d^3 \mathbf{x}\, e^{-ip_rx^r} \tilde{a}^*(\mathbf{x})
	\int
	d^3 \mathbf{y} \,\left(	\frac{p_j}{\,\omega_{\mathbf{p}}^2} e^{ ip_ky^k} \right)\tilde{a} (\mathbf{y} ) \\&=   -
	\frac{i}{4}
	(2\pi)^{-3} \iint
	d^3 \mathbf{x}\,d^3 \mathbf{y}\,\tilde{a}^*(\mathbf{x})
	\left(\int  d^3\mathbf{p}\,\frac{1}{\omega_{\mathbf{p}}^2} e^{-ip_k(x-y)^k}\right)
	\, \frac{\partial}{\partial y^{j}}\tilde{a} (\mathbf{y} )
	\\&=  - \frac{i}{16}\pi^{-1}\iint
	d^3 \mathbf{x}\,d^3 \mathbf{y}\,\tilde{a}^*(\mathbf{x}) 
	\vert \mathbf{x}-\mathbf{y}\vert^{-1} 
	\, \frac{\partial}{\partial y^{j}}\tilde{a} (\mathbf{y} ),
	\end{align*}
	where in the last lines we expressed the momentum as a derivative, performed a partial integration and used the Fourier-transformation given in Formula (\ref{f1}) for $\lambda=-1$. A  general term which includes, after taking the corresponding indices, the second and third operator of the spatial special conformal object is given by,
	\begin{align*}
	O_{j}&= 
	\int  d^3\mathbf{p}\,{a}^{*}(\textbf{p}) \left(p_{r}\frac{\partial}{\partial p_{l}}\frac{\partial}{\partial p^{j}}\right)
	a(\textbf{p})\\&= 
	(2\pi)^{-3}\int  d^3\mathbf{p} \int
	d^3 \mathbf{x}\, e^{-ip_sx^s} \tilde{a}^*(\mathbf{x})
	\int
	d^3 \mathbf{y} \,\left(	p_r\frac{\partial}{\partial p_{l}}\frac{\partial}{\partial p^{j}}e^{ ip_ky^k} \right)\tilde{a} (\mathbf{y} )
	\\&=  i
	(2\pi)^{-3}\int  d^3\mathbf{p} \int
	d^3 \mathbf{x}\, e^{-ip_sx^s } \tilde{a}^*(\mathbf{x})
	\int
	d^3 \mathbf{y} \,\left(y_{j}y^{l} \frac{\partial}{\partial y^{r}} e^{ ip_ky^k} \right)\tilde{a} (\mathbf{y} )\\&= - i \int
	d^3 \mathbf{x} \,\tilde{a}^*(\mathbf{x})
	\left(\eta_{rj}x^l+\eta_{r}^{\,\,l}x_{j}+ x_{j}x^{l} \frac{\partial}{\partial x^ {r}} \right) \tilde{a} (\mathbf{x} ),
	\end{align*}
	where in the last lines we applied the derivatives to the exponential, expressed the coordinate $y$	as a derivative and performed a partial integration.   
\end{proof}
  \begin{remark}
  	As for  Lorentz generators,  dilatation and special conformal operators can be written in terms of symmetric second-quantized products of the momentum and NWP-operator. For the dilatation operator it is straight forward,
  	\begin{align*}
  	D=\frac{1}{2}\big(d\Gamma(P_jX^j) +d\Gamma(X^jP_j)\big).
  	\end{align*}
  	while the special-conformal operators are a bit more involved. For the zero component we have, 
  		\begin{align*}
  		K_0&=-\frac{3}{4}d\Gamma(P_0^{-1})-id\Gamma(V^{l}X_l)+d\Gamma(P_0X_lX^l) 	\end{align*}	
while the spatial components in terms of the NWP and the momentum operator read, 
\begin{align*}
  		K_j&=-\frac{1}{4}d\Gamma(P_0^{-1}V_j)-3id\Gamma(X_j)+2d\Gamma(P^lX_lX_j)-d\Gamma(P_jX_lX^l) .
  		\end{align*}
  		 
  \end{remark}
   
  In this section we expressed all generators of the conformal group in terms of the ladder operators of the NWP space for $x_0=0$. However, the operator $\phi_1$ (see Equation (\ref{copes}))  responsible for calculating the probability of finding $k$-particle at a certain position has an explicit time dependence. Hence, in order  to connect our results with this time-dependency we give the following proposition.
  \begin{proposition}\label{sf}
The operator $\phi_1$ that has the following form,
  	\begin{align*} 
  	\phi_1( {x}) =(2\pi)^{-n/2}\,
  	\int d^n \mathbf{p}\, e^{-i {p} \cdot {x}}\,a(\mathbf{p}),
  	\end{align*}
 can be obtained by performing a time-translation on the   annihilation operator $\tilde{a}$ of the NWP-space
  	as follows
  	\begin{align*} 
  	\phi_1( x^{0}, \mathbf{x})  = e^{ix^{0}P_{0}} \tilde{a}(	 \mathbf{x})e^{-ix^{0}P_{0}}, \qquad x^0\in\mathbb{R}.
  	\end{align*}
  	
  \end{proposition}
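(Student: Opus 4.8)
The plan is to reduce the identity to the elementary adjoint action of the time-translation generator $P_{0}$ on the sharp-momentum annihilation operator $a(\mathbf{p})$, and then to reassemble the Fourier integral that defines $\tilde{a}(\mathbf{x})$. Throughout I work weakly on the dense domain $\bigotimes_{i=1}^{k}\mathscr{S}(\mathbb{R}^{n})$, on which all the operators involved are well-defined.

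First I would compute the commutator $[P_{0},a(\mathbf{p})]$ from the momentum-space form $P_{0}=\int d^{n}\mathbf{q}\,\omega_{\mathbf{q}}\,a^{*}(\mathbf{q})a(\mathbf{q})$ together with the canonical relation $[a(\mathbf{p}),a^{*}(\mathbf{q})]=\delta^{n}(\mathbf{p}-\mathbf{q})$. A one-line manipulation, using $[a^{*}(\mathbf{q})a(\mathbf{q}),a(\mathbf{p})]=-\delta^{n}(\mathbf{p}-\mathbf{q})\,a(\mathbf{q})$ and integrating against the delta, gives $[P_{0},a(\mathbf{p})]=-\omega_{\mathbf{p}}\,a(\mathbf{p})$.

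Second I would exponentiate this relation. Since $\mathrm{ad}_{ix^{0}P_{0}}\,a(\mathbf{p})=[ix^{0}P_{0},a(\mathbf{p})]=-ix^{0}\omega_{\mathbf{p}}\,a(\mathbf{p})$ is merely a scalar multiple of $a(\mathbf{p})$, every further application of the adjoint action reproduces $a(\mathbf{p})$ up to the factor $-ix^{0}\omega_{\mathbf{p}}$. Summing the resulting geometric series $e^{\mathrm{ad}_{ix^{0}P_{0}}}a(\mathbf{p})=\sum_{n}\tfrac{1}{n!}(-ix^{0}\omega_{\mathbf{p}})^{n}a(\mathbf{p})$ — equivalently, solving the Heisenberg equation $\tfrac{d}{dx^{0}}a(\mathbf{p},x^{0})=i[P_{0},a(\mathbf{p},x^{0})]$ with initial datum $a(\mathbf{p},0)=a(\mathbf{p})$ — yields $e^{ix^{0}P_{0}}a(\mathbf{p})e^{-ix^{0}P_{0}}=e^{-ix^{0}\omega_{\mathbf{p}}}\,a(\mathbf{p})$. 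I would then insert the inverse Fourier representation (\ref{inf}) of $\tilde{a}(\mathbf{x})$, pull the conjugation $e^{ix^{0}P_{0}}(\cdot)e^{-ix^{0}P_{0}}$ inside the $\mathbf{p}$-integral, and apply the previous step; the two phases combine, via the convention $p\cdot x=p_{0}x^{0}+p_{k}x^{k}$ with the on-shell value $p_{0}=\omega_{\mathbf{p}}$, into $e^{-ix^{0}\omega_{\mathbf{p}}}e^{-ip_{k}x^{k}}=e^{-ip\cdot x}$, so the integral becomes precisely $\phi_{1}(x^{0},\mathbf{x})$.

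The main obstacle is the rigorous treatment of the unbounded, operator-valued-distribution nature of $a(\mathbf{p})$ and $P_{0}$: one must justify that $e^{-ix^{0}P_{0}}$ leaves the Schwartz domain invariant so that the adjoint action is well-defined, and that interchanging the conjugation with the $\mathbf{p}$-integral is legitimate. Both points are handled by first smearing with a test function $f\in\mathscr{S}(\mathbb{R}^{n})$, verifying the identity for the bounded, well-defined smeared operators $a(f)$ on the dense domain, and only afterwards removing the smearing in the distributional sense; the geometric series converges because its coefficients reduce to the scalar exponential $e^{-ix^{0}\omega_{\mathbf{p}}}$.
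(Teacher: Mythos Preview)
Your proof is correct and follows essentially the same route as the paper: insert the inverse Fourier representation of $\tilde{a}(\mathbf{x})$, use the time-translation action $e^{ix^{0}P_{0}}a(\mathbf{p})e^{-ix^{0}P_{0}}=e^{-ix^{0}\omega_{\mathbf{p}}}a(\mathbf{p})$, and combine the phases into $e^{-ip\cdot x}$. The only difference is that you derive this adjoint action explicitly from the commutator $[P_{0},a(\mathbf{p})]$ and add remarks on domains and smearing, whereas the paper simply invokes it as the known transformation of the momentum-space annihilation operator under time translations.
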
	\begin{proof}
  By using the inverse Fourier-transformation  of $\tilde{a}$ and the transformation of the annihilation operator in momentum space under time-translation, the proof is completed, i.e. 
  \begin{align*} 
  e^{ix^{0}P_{0}} \tilde{a}(	 \mathbf{x})e^{-ix^{0}P_{0}}&=(2\pi)^{-n/2} \int
  d^n \mathbf{p}\, e^{-ip_kx^k}e^{ix^{0}P_{0}}{a}(\mathbf{p})e^{-iy^{0}P_{0}}\\&=(2\pi)^{-n/2} \int
  d^n \mathbf{p}\,  e^{-ip_kx^k}\,e^{-ip_{0}x^{0} }{a}(\mathbf{p}) .
  \end{align*}
\end{proof}Hence, we can simply time-translate all the operators of the conformal group in the NWP-space, by using the same time $x^0$, in order to obtain the expressions for a general time and in terms of the operator $\phi_1$. Since the zero component of the momentum commutes with the spatial momentum and the rotations, we only have to calculate the time-translation for the remaining generators by using the algebra.

\begin{lemma}
	The dilatation transforms under time-translations as follows,
	\begin{align*}
	e^{ix^{0}P_{0}}D e^{-ix^{0}P_{0}} 	= D-x^{0}P_{0}, 
	\end{align*}
	where $x^0\in\mathbb{R}$. Under time-translations the special conformal operator transforms as,
	\begin{align*}
	e^{ix^{0}P_{0}}K_{\mu} e^{-ix^{0}P_{0}}&=
	K_{\mu}-2x^{0}\left(
	\eta_{0\mu}D-M_{0\mu}
	\right)-(x^{0})^2
	\left( P_{\mu}\right).
	\end{align*}
\end{lemma}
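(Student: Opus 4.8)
The plan is to treat both identities as special cases of the operator conjugation (Hadamard) formula
\[
e^{A}Be^{-A}=\sum_{n\geq 0}\frac{1}{n!}(\mathrm{ad}_A)^{n}B,\qquad \mathrm{ad}_A B:=[A,B],
\]
applied with $A=ix^{0}P_{0}$, reading off each nested commutator directly from the conformal algebra in Equations (\ref{f11})--(\ref{f12}) together with the momentum--Lorentz relation $[P_{\rho},M_{\mu\nu}]=i(\eta_{\rho\mu}P_{\nu}-\eta_{\rho\nu}P_{\mu})$. The structural fact that drives everything is that $P_{0}$ commutes with every $P_{\mu}$, so as soon as a repeated commutator has been reduced to a polynomial in the momenta the series terminates. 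This is precisely why the right-hand sides are exact polynomials in $x^{0}$ (degree one for $D$, degree two for $K_{\mu}$) and no convergence question arises.

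For the dilatation the computation is immediate: since $[P_{0},D]=iP_{0}$ by (\ref{f11}), the first-order term is $\mathrm{ad}_{ix^{0}P_{0}}D=ix^{0}[P_{0},D]=-x^{0}P_{0}$, while the next term already contains $[P_{0},P_{0}]=0$. Hence the expansion stops after one step and gives $D-x^{0}P_{0}$.

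For the special conformal operator I would proceed in two stages. The first-order term uses (\ref{f12}), $\mathrm{ad}_{ix^{0}P_{0}}K_{\mu}=ix^{0}\,2i(\eta_{0\mu}D-M_{0\mu})=-2x^{0}(\eta_{0\mu}D-M_{0\mu})$, which reproduces the linear-in-$x^{0}$ contribution. For the second-order term one commutes $P_{0}$ once more into $\eta_{0\mu}D-M_{0\mu}$, using $[P_{0},D]=iP_{0}$ and $[P_{0},M_{0\mu}]=i(\eta_{00}P_{\mu}-\eta_{0\mu}P_{0})$; this yields a combination of $P_{\mu}$ and $P_{0}$ carrying a factor $(x^{0})^{2}$. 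Since the result is again linear in the momenta, the third-order commutator vanishes and the expansion closes, producing the quadratic term $-(x^{0})^{2}P_{\mu}$ (the pieces proportional to $\eta_{0\mu}$ dropping out for the spatial components, as in the stated formula).

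The only genuinely delicate point is the bookkeeping of the metric contractions in this second-order commutator: one must track the $\eta_{00}$ and $\eta_{0\mu}$ factors generated by $[P_{0},M_{0\mu}]$ and combine them with the $\eta_{0\mu}D$ piece before the coefficient of $(x^{0})^{2}$ can be read off cleanly. Everything else is a mechanical application of the Hadamard identity, with the termination of the series guaranteed by $[P_{0},P_{\mu}]=0$.
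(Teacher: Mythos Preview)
Your proposal is correct and follows essentially the same route as the paper: both apply the Baker--Campbell--Hausdorff (Hadamard) expansion with $A=ix^{0}P_{0}$ and read off the nested commutators directly from the conformal algebra relations (\ref{f11}), (\ref{f12}) and $[P_{\rho},M_{\mu\nu}]=i(\eta_{\rho\mu}P_{\nu}-\eta_{\rho\nu}P_{\mu})$, with the series terminating because $[P_{0},P_{\mu}]=0$. Your explicit remark that the second-order commutator produces an extra $\eta_{0\mu}P_{0}$ contribution (which vanishes for spatial $\mu$) is in fact slightly more careful than the paper's own write-up of that step.
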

\begin{proof}
	The proof solely uses the conformal algebra (see Formulas (\ref{f11}) and (\ref{f12})).  For the dilatations   we have,
	\begin{align*}
	e^{ix^{0}P_{0}}D e^{-ix^{0}P_{0}} =
	D-x^{0}P_{0} +\frac{i^2}{2!}(x^{0})^2
	\underbrace{[P_{0},[P_{0},D]]+ \cdots}_{=0}= D-x^{0}P_{0},
	\end{align*}
	where in the last lines we used the Backer-Hausdorff formula.	A more complex expression is the transformation of the special-conformal operator, i.e. 
	\begin{align*}
	e^{ix^{0}P_{0}}K_{\mu} e^{-ix^{0}P_{0}}&=K_{\mu}+ix^{0}[P_{0},K_{\mu}]+\frac{i^2}{2!}(x^{0})^2
	[P_{0},[P_{0},K_{\mu}]]+\underbrace{\cdots}_{=0}
	\\&= K_{\mu}-2x^{0}\left(
	\eta_{0\mu}D-M_{0\mu}
	\right)-(x^{0})^2
	\left( \eta_{00}P_{\mu}\right) ,
	\end{align*}
	where we used the Baker-Campbell-Hausdorff formula and the specific commutator relations of the   conformal algebra.
	
\end{proof}
Since, we have all expressions in terms of generators of the NWP-space we can equivalently give the conformal algebra expressed by  operators $\phi^*_1(x_0,\mathbf{x})$ and $\phi_1(x_0,\mathbf{y})$. Note that neither the coordinate operator nor its respective  eigenstates are covariant w.r.t. general Lorentz-transformations. Nevertheless, the generators of the conformal group obey the covariant transformation property independent of their representation. 	Hence, even though we use non-covariant eigenstates, operators of the conformal group represented in the NWP-space respect the Poincar\'e symmetry. 
\section{Conformal-Transformations of NWP-States} 
In this section we  calculate  the explicit adjoint  action of the  conformal group on the Fourier-transformed ladder operators. In some cases this is done by taking the massless limit of the massive theory. \\\\
	 The main physical motivation to calculate the adjoint action w.r.t. conformal transformations, in the coordinate space, is given by the fact that we   use  operator   $\phi_1(x)$  to calculate the probability   of localizing  $k$-particles at $k$-spatial positions. Therefore,  we intend to investigate under which subgroup of the conformal group the probability amplitudes of localization are covariant.  
	\\\\ In order find the covariance group, we first define the unitary operator generating  transformations  of the orthochronous proper Poincar\'e group  $\mathscr{P}^{\uparrow}_{+}=\mathscr{L}^{\uparrow}_{+}\ltimes\mathbb{R}^4$ denoted by $U(y,\Lambda)$. It transforms  creation and annihilation operators in the following fashion, \cite[Chapter 7]{Sch},
	\begin{align}\label{traf}
	U(y,\Lambda) a(\mathbf{p})U(y,\Lambda)^{-1}&= \sqrt{\frac{	\omega_{\Lambda \mathbf{p}}}{	\omega_{ \mathbf{p}}}}e^{-i(\Lambda p)_{\mu}y^{\mu}}{a}(\Lambda\mathbf{p}) ,\qquad (y,\Lambda) \in \mathscr{P}^{\uparrow}_{+},\\\label{traf1}
	U(y,\Lambda) a^{*}(\mathbf{p})U(y,\Lambda)^{-1}&= \sqrt{\frac{	\omega_{\Lambda \mathbf{p}}}{	\omega_{ \mathbf{p}}}}e^{ i(\Lambda p)_{\mu}y^{\mu}}{a}^{*}(\Lambda\mathbf{p}) ,\qquad (y,\Lambda) \in \mathscr{P}^{\uparrow}_{+}
	.
	\end{align} 
	By using the former equations we first calculate the adjoint action of translations on the NWP-ladder operators.
	\begin{lemma}\label{l63}
		The Fourier-transformed ladder  operator  $\tilde{a}$   of the massless field  transforms under  translations as follows,  
		\begin{align}\label{eq51}
		U({y},\mathbb{I})  \tilde{a} (	 \mathbf{x}) U({y},\mathbb{I}) ^{-1} = c_{y } \left(
		\frac{	 1}{\vert \mathbf{x+y} \vert^{2}-(y^{0})^2 } \right)^2\ast\tilde{a} (\mathbf{x+y}),
		\end{align} 
		where $c_{y }= \frac{i y^{0}}{ \pi^2}$, $\ast$ denotes the convolution and $y\in\mathbb{R}^d$.
	\end{lemma}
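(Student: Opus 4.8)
The plan is to reduce the statement to a single Fourier-transform computation by pushing the adjoint action through the integral representation of $\tilde{a}$. First I would insert the inverse Fourier transformation (\ref{inf}), $\tilde{a}(\mathbf{x})=(2\pi)^{-n/2}\int d^n\mathbf{p}\, e^{-ip_k x^k}a(\mathbf{p})$, and use linearity of the conjugation by $U(y,\mathbb{I})$ to move it under the momentum integral. Setting $\Lambda=\mathbb{I}$ in the transformation law (\ref{traf}) trivialises the square-root factor, since $\omega_{\mathbb{I}\mathbf{p}}=\omega_\mathbf{p}$, and leaves the pure phase $U(y,\mathbb{I})a(\mathbf{p})U(y,\mathbb{I})^{-1}=e^{-ip_\mu y^\mu}a(\mathbf{p})$.

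Next I would split this phase into its temporal and spatial parts, $e^{-ip_\mu y^\mu}=e^{-ip_0 y^0}e^{-ip_k y^k}$, with $p_0=\omega_\mathbf{p}=|\mathbf{p}|$ on the massless shell. The spatial factor $e^{-ip_k y^k}$ combines with the Fourier kernel $e^{-ip_k x^k}$ into $e^{-ip_k(x+y)^k}$, i.e. it simply shifts the coordinate argument to $\mathbf{x+y}$, while the temporal factor $e^{-i|\mathbf{p}|y^0}$ survives as a momentum-space multiplier. Re-expressing $a(\mathbf{p})$ through the forward transform of $\tilde{a}$ and performing the $\mathbf{p}$-integral first turns this multiplier into a convolution kernel, so that the right-hand side takes the form $\int d^n\mathbf{z}\, G(\mathbf{x+y}-\mathbf{z})\tilde{a}(\mathbf{z})$ with $G(\mathbf{w})=(2\pi)^{-n}\int d^n\mathbf{p}\, e^{-i|\mathbf{p}|y^0}e^{-ip_k w^k}$.

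It then remains to evaluate $G$ in the physical dimension $n=3$, which is where the explicit power of the kernel originates. Passing to spherical coordinates for $\mathbf{p}$ and doing the angular integral yields a radial integral of the form $\frac{4\pi}{r}\int_0^\infty \rho\, e^{-i\rho y^0}\sin(\rho r)\, d\rho$ with $r=|\mathbf{w}|$. The hard part will be that this radial integral is only conditionally convergent, so it must be assigned a value by a distributional prescription (e.g. $y^0\to y^0-i\epsilon$, or inserting $e^{-\epsilon\rho}$ and letting $\epsilon\to 0^+$); writing $\sin(\rho r)$ in exponentials and using $\int_0^\infty \rho\, e^{i\rho a}d\rho=-a^{-2}$ in this regularised sense gives $G(\mathbf{w})=\frac{iy^0}{\pi^2}(|\mathbf{w}|^2-(y^0)^2)^{-2}$ after collecting the $(2\pi)^{-3}$ and $4\pi$ prefactors. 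Finally, substituting $\mathbf{w}=\mathbf{x+y}-\mathbf{z}$ and recognising the integral as a convolution identifies $c_y=iy^0/\pi^2$ and reproduces (\ref{eq51}); the only genuine analytic subtlety throughout is the regularisation of the non-absolutely-convergent Fourier transform of $e^{-i|\mathbf{p}|y^0}$.
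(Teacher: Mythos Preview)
Your argument is correct and lands on the same formula; the structural split into a spatial shift $e^{-ip_k y^k}$ and a temporal multiplier $e^{-i|\mathbf{p}|y^0}$ that becomes a convolution kernel is exactly what underlies the paper's proof as well. The genuine difference is in how the kernel $G(\mathbf{w})=(2\pi)^{-3}\int d^3\mathbf{p}\,e^{-i|\mathbf{p}|y^0}e^{-ip_k w^k}$ is evaluated. You compute it directly in spherical coordinates, regularising the conditionally convergent radial integral; the paper instead imports the corresponding \emph{massive} integral from \cite{Muc7}, which evaluates to $\tfrac{iy^0}{2\pi^2}\,m^2 K_2\!\big(m\sqrt{|\mathbf{w}|^2-(y^0)^2}\big)/(|\mathbf{w}|^2-(y^0)^2)$, and then sends $m\to 0$ using $\lim_{m\to 0} m^2 K_2(mf)=2/f^2$. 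Your route is more self-contained and elementary (no Bessel functions, no dependence on the massive calculation), while the paper's route buys the result essentially for free once the massive case is in hand and makes the $m\to 0$ continuity of the construction explicit. Both need a distributional interpretation somewhere---you put it in the $i\epsilon$/damping prescription for the radial integral, the paper hides it in the massive computation and the subsequent limit.
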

	\begin{proof}
		The proof for spatial translations can be done as in the massive case (see \cite[Lemma 5.1]{Muc7})  since for this proof there is no explicit mass dependency. In particular they act accordingly as spatial translations in coordinate space, i.e.
			\begin{align*}		U( \mathbf{y},\mathbb{I})  \tilde{a} (	 \mathbf{x}) U( \mathbf{y},\mathbb{I}) ^{-1} 
			 &=
			(2\pi)^{3/2} \int
			d^{3} \mathbf{p}\, e^{-ip_kx^k}    U(\mathbf{y},\mathbb{I}) {a}(\mathbf{p})  U(\mathbf{y},\mathbb{I}) ^{-1}\\
			&=(2\pi)^{-3/2} \int
			d^{3} \mathbf{p}\, e^{-ip_k(x+y)^k}     {a}(\mathbf{p})= \tilde{a}(	 \mathbf{x}+\mathbf{y}).
				\end{align*} 
		 For  time translations, we  take the massless limit of the massive case (see \cite[Lemma 5]{Muc7}) where  we had the integral, 
			\begin{align*}(2\pi)^{-3}\int
			d^{3} \mathbf{p}\, e^{-ip_kx^k}   e^{-iy^{0}  \omega_{m,\mathbf{p}}} =
		\frac{i	y^{0}}{ 2\pi^2} \frac{ m^2K_{2}(m\sqrt{\vert \mathbf{x}-\mathbf{z}\vert^{2}-(y^{0})^2})}{\vert \mathbf{x}-\mathbf{z}\vert^{2}-(y^{0})^2},\end{align*} 
		 with  $\omega_{m,\mathbf{p}}$ being  the energy for a   particle with mass $m$. By using the following limit, 
		\begin{align}\label{ml0}
		\lim\limits_{m\rightarrow 0} m^2\, K_{2}(m f)=\frac{2}{f^2},
		\end{align} 
	 the explicit form of the time translation can be given in the massless case as follows, 
			 \begin{align*} U(y^{0},\mathbb{I}) \tilde{a}(	 \mathbf{x})  U(y^{0},\mathbb{I})^{-1} &=e^{iy^{0}P_{0}} \tilde{a}(	 \mathbf{x})e^{-iy^{0}P_{0}}\\
			 &=(2\pi)^{-3/2} \int
			 d^{3} \mathbf{p}\, e^{-ip_kx^k}   e^{-iy^{0}  \omega_{\mathbf{p}}}   {a}(\mathbf{p})  \\
			 &=(2\pi)^{-3 }\int
			 d^{3} \mathbf{z}    {\left(\int
			 	d^{3} \mathbf{p}\, e^{-ip_k(x-z)^k} e^{-i  \omega_{\mathbf{p}}y^{0}}\right)}_{
			 }  \tilde{a}(\mathbf{z})\\&=\frac{i	y^{0}}{ \pi^2}
			 \int
			 d^{3} \mathbf{z}   \left(
			 \frac{	 1}{\vert \mathbf{x-z} \vert^{2}-(y^{0})^2 } \right)^2\tilde{a}(\mathbf{z}).
			 \end{align*}
			 By taking the commutativity of the components of the momentum operator into account, we obtain the general transformation behavior under translations by using the former two results,  i.e.
			 	\begin{align} 	U(  {y},\mathbb{I})  \tilde{a} (	 \mathbf{x}) U(  {y},\mathbb{I}) ^{-1} =
		  U(y^{0},\mathbb{I}) 	U( \mathbf{y},\mathbb{I})  \tilde{a} (	 \mathbf{x}) U( \mathbf{y},\mathbb{I}) ^{-1}   U(y^{0},\mathbb{I})^{-1}.
			 	\end{align} 
			 
	\end{proof}
Before we turn to the  Lorentz transformations in the massless case, we define the rotation group. 	In particular we denote   matrices of the Lorentz group representing  pure rotations as  $\Lambda_R$. They are given by matrices of the following form,
\[\Lambda_R=\left(
\begin{matrix} 
1 & 0\\
0 & R	 
\end{matrix}\right), \qquad R\in SO(3).
\]
	\begin{lemma}\label{l622}Under pure rotations the   Fourier-transformed ladder  operators $\tilde{a}$   transforms in a covariant manner  as follows, 
		\begin{align*}
		U(0,\Lambda_R)  \tilde{a} (	 \mathbf{x}) U(0,\Lambda_R) ^{-1} =
		\tilde{a} (	 \mathbf{R x }) .
		\end{align*}
	 
	\end{lemma}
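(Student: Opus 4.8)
The plan is to combine the inverse Fourier representation of $\tilde{a}$ from Equation (\ref{inf}) with the Poincar\'e transformation rule (\ref{traf}), specialized to the case $y=0$ and $\Lambda=\Lambda_R$. First I would insert the representation $\tilde{a}(\mathbf{x})=(2\pi)^{-n/2}\int d^n\mathbf{p}\, e^{-ip_kx^k}a(\mathbf{p})$ and push the adjoint action $U(0,\Lambda_R)(\,\cdot\,)U(0,\Lambda_R)^{-1}$ inside the momentum integral, so that it acts only on the single ladder operator $a(\mathbf{p})$. Applying (\ref{traf}) then turns the integrand into $\sqrt{\omega_{\Lambda_R\mathbf{p}}/\omega_{\mathbf{p}}}\, e^{-i(\Lambda_R p)_\mu 0^\mu}\, a(\Lambda_R\mathbf{p})$, where the spatial part of $\Lambda_R p$ is simply $R\mathbf{p}$ because $\Lambda_R$ is block-diagonal.

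The crucial simplification is that both extra factors are trivial here. The phase $e^{-i(\Lambda_R p)_\mu y^\mu}$ vanishes because $y=0$, and the energy ratio equals one: since $\omega_{\mathbf{p}}=|\mathbf{p}|$ and a rotation preserves the Euclidean norm, $\omega_{R\mathbf{p}}=|R\mathbf{p}|=|\mathbf{p}|=\omega_{\mathbf{p}}$. This is precisely the feature that distinguishes rotations from boosts in the massless case, and it is what makes the transformation covariant rather than producing a non-local convolution kernel as in Lemma \ref{l63}. Hence the adjoint action reduces to $(2\pi)^{-n/2}\int d^n\mathbf{p}\, e^{-ip_kx^k}a(R\mathbf{p})$.

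Finally I would perform the change of variables $\mathbf{q}=R\mathbf{p}$, whose Jacobian is $|\det R|=1$, so $d^n\mathbf{p}=d^n\mathbf{q}$, and use the orthogonality of $R$ to transfer the rotation from the integration variable to the position argument: since $R^{-1}=R^{\mathrm{T}}$ one has $(R^{-1}\mathbf{q})\cdot\mathbf{x}=\mathbf{q}\cdot(R\mathbf{x})$, i.e. $p_kx^k=q_k(Rx)^k$. The integral then reassembles into exactly the inverse Fourier representation of $\tilde{a}$ evaluated at the rotated point, giving $\tilde{a}(R\mathbf{x})$, which is the claim. I do not expect any serious obstacle: the only point requiring care is the index bookkeeping in moving $R$ across the Euclidean scalar product and verifying the unit Jacobian, both of which are routine consequences of $R\in SO(3)$; the substantive physical input is just the rotation invariance of $\omega_{\mathbf{p}}$.
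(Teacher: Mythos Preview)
Your proposal is correct and follows essentially the same route as the paper's proof: insert the inverse Fourier representation (\ref{inf}), apply the transformation rule (\ref{traf}) with $y=0$, observe that $\omega_{R\mathbf{p}}=\omega_{\mathbf{p}}$ for rotations so the square-root factor drops out, and then use orthogonality of $R$ (together with the unit Jacobian) to move the rotation from the momentum variable to the position argument. The paper presents these steps more tersely, jumping directly from $e^{-ip_kx^k}a(R\mathbf{p})$ to $e^{-ip_k(Rx)^k}a(\mathbf{p})$ without spelling out the substitution, but the content is the same.
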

	\begin{proof}
		The proof is done analog to the proof of the former Lemma. However, we could take the results of \cite[Lemma 5.2]{Muc7}  and perform the massless limit by using Equation (\ref{ml0}).  First, we calculate the action of pure spatial rotations  on the   Fourier-transformed annihilation operator, 
			\begin{align*}
			U(0,\Lambda_R)  \tilde{a}^{\,}(	 \mathbf{x})  	U(0,\Lambda_R)  ^{-1} &=
			(2\pi)^{-3/2} \int
			d^{3} \mathbf{p}\, e^{-ip_kx^k}   U(0,\Lambda_R){a}(\mathbf{p}) U(0,\Lambda_R)^{-1}\\&=
			(2\pi)^{-3/2} \int
			d^{3} \mathbf{p}\, e^{-ip_kx^k}   \sqrt{\frac{	\omega_{R\mathbf{p}}}{	\omega_{ \mathbf{p}}}}{a}(R\mathbf{p})  
			\\&=
			(2\pi)^{-3/2} \int
			d^{3} \mathbf{p}\, e^{-i p_k(R x)^k}    {a}( \mathbf{p}),
			\end{align*}
			in the last lines we used the transformation behavior of the non-covariant momentum ladder operators  (see \cite[Chapter 7, Equation 57]{Sch}) and the orthogonality of $\Lambda_{R}$ for pure spatial rotations. 
	\end{proof}	
The NWP ladder operators transform non-covariantly w.r.t. the Lorentz-boosts. This is to be expected, since the NWP-operator is non-covariant w.r.t. the boosts. However, we see in the next theorem that the operator $\phi_1(x)$ is at least covariant w.r.t. space-time translations and pure rotations.
\begin{theorem}\label{t2}
	The NWP-ladder operators transform    under space-time translations and pure rotations, i.e. 	with $(y,\Lambda_{R}) \in \mathscr{P}^{\uparrow}_{+}$, as follows,
\begin{align}
	U(y,\Lambda_R)  \tilde{a}(\mathbf{x}) U(y,\Lambda_R)^{-1}&=
	 \frac{i y^{0}}{ \pi^2} \left(
	 \frac{	 1}{\vert \mathbf{Rx+y} \vert^{2}-(y^{0})^2 } \right)^2\ast\tilde{a} (\mathbf{Rx+y}) \\
	 	U(y,\Lambda_R)  \tilde{a}^{*}(\mathbf{x}) U(y,\Lambda_R)^{-1}&=-
	 	\frac{i y^{0}}{ \pi^2} \left(
	 	\frac{	 1}{\vert \mathbf{Rx+y} \vert^{2}-(y^{0})^2 } \right)^2\ast\tilde{a}^{*} (\mathbf{Rx+y}).
\end{align}
Moreover, the operator used to calculate the probabilities, i.e. $\phi_1(x)$ transforms covariantly,
\begin{align}\label{t2e2}
U(y,\Lambda_R) \phi_1(x)   U(y,\Lambda_R)^{-1}= \phi_1(\Lambda_Rx+y)\qquad (y,\Lambda_{R}) \in \mathscr{P}^{\uparrow}_{+}.
\end{align}

\end{theorem}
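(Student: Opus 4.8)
The plan is to assemble the $\tilde{a}$-transformation from the two special cases already established and then treat $\phi_1$ separately in momentum space. For the first displayed identity I would exploit the semidirect-product structure $\mathscr{P}^{\uparrow}_{+}=\mathscr{L}^{\uparrow}_{+}\ltimes\mathbb{R}^4$, which yields the factorization $U(y,\Lambda_R)=U(y,\mathbb{I})\,U(0,\Lambda_R)$ since $(y,\mathbb{I})(0,\Lambda_R)=(y,\Lambda_R)$. Conjugating $\tilde{a}(\mathbf{x})$ by the inner factor and invoking Lemma \ref{l622} replaces $\tilde{a}(\mathbf{x})$ by $\tilde{a}(\mathbf{Rx})$; conjugating the result by the outer translation factor is then exactly Lemma \ref{l63} with the argument $\mathbf{x}$ replaced by $\mathbf{Rx}$. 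Substituting $\mathbf{x}\mapsto\mathbf{Rx}$ into the translation kernel of Lemma \ref{l63} produces the convolution centred at $\mathbf{Rx}+\mathbf{y}$ and gives the first equation directly.

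For the creation-operator identity I would not repeat the computation but simply take the Hermitian adjoint of the first equation. Since $U(y,\Lambda_R)$ is unitary, $\big(U\tilde{a}(\mathbf{x})U^{-1}\big)^{*}=U\tilde{a}^{*}(\mathbf{x})U^{-1}$, so the left-hand side is correct. On the right-hand side the convolution kernel $\big(|\mathbf{Rx}+\mathbf{y}|^{2}-(y^{0})^{2}\big)^{-2}$ is real, while the prefactor $c_{y}=iy^{0}/\pi^{2}$ is purely imaginary; hence conjugation leaves the kernel and the shift of argument untouched but sends $c_{y}\mapsto\overline{c_{y}}=-c_{y}$, which is precisely the sign appearing in the second equation.

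The covariance of $\phi_1$ is the genuinely new content, and here I would work directly from the momentum-space form $\phi_1(x)=(2\pi)^{-n/2}\int d^n\mathbf{p}\,e^{-ip\cdot x}a(\mathbf{p})$ rather than propagating the non-covariant $\tilde{a}$-result through a time translation (Proposition \ref{sf}). Applying the Poincar\'e rule \eqref{traf} produces a normalization factor $\sqrt{\omega_{\Lambda_R\mathbf{p}}/\omega_{\mathbf{p}}}$ which equals $1$ for a pure rotation, since $\omega_{R\mathbf{p}}=|R\mathbf{p}|=|\mathbf{p}|=\omega_{\mathbf{p}}$, together with the phase $e^{-i(\Lambda_R p)_{\mu}y^{\mu}}$. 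I would then change variables to $\mathbf{q}=R\mathbf{p}$, whose Lebesgue measure is invariant, and collect the two exponentials. The key identity is $p\cdot x+(\Lambda_R p)\cdot y=q\cdot(\Lambda_R x+y)$, valid because the Minkowski product is preserved by $\Lambda_R$ and $R$ is orthogonal; this collapses the phases into $e^{-iq\cdot(\Lambda_R x+y)}$ and reproduces $\phi_1(\Lambda_R x+y)$.

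The only point demanding care — and what I regard as the conceptual heart of the statement — is that $\tilde{a}$ transforms with the singular convolution kernel of Lemma \ref{l63} and is manifestly non-covariant, yet $\phi_1$ emerges covariant. The reason is that $\phi_1$ is built from the covariant momentum operator $a(\mathbf{p})$ dressed with the on-shell relativistic phase $e^{-ip\cdot x}$, so in momentum space its transformation is transparent; the main obstacle is therefore purely bookkeeping, namely verifying the phase identity above and confirming the rotation-invariance of both $\omega_{\mathbf{p}}$ and $d^n\mathbf{p}$, after which the result follows with no analytic subtlety.
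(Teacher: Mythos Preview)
Your treatment of the first two displayed identities matches the paper's proof essentially line for line: the same factorization $U(y,\Lambda_R)=U(y,\mathbb{I})\,U(0,\Lambda_R)$, the same application of Lemmas \ref{l622} and \ref{l63} in that order, and the same adjoint argument for $\tilde{a}^{*}$.

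Where you diverge is in the proof of the covariance of $\phi_1$. The paper does \emph{not} go to momentum space. Instead it writes $\phi_1(x)=U(x_0,\mathbb{I})\tilde{a}(\mathbf{x})U(x_0,\mathbb{I})^{-1}$ via Proposition \ref{sf}, commutes the time translation $U(x_0,\mathbb{I})$ past $U(y,\Lambda_R)$ (legitimate because $P_0$ commutes with spatial translations and with rotations), applies the already-established $\tilde{a}$-transformation, and then reassembles everything as $\phi_1(\Lambda_R x+y)$. Your route via \eqref{traf}, the rotational invariance of $\omega_{\mathbf{p}}$ and $d^n\mathbf{p}$, and the phase identity $p\cdot x+(\Lambda_R p)\cdot y=q\cdot(\Lambda_R x+y)$ is correct and arguably more transparent, since it never touches the singular convolution kernel at all. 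What the paper's approach buys is thematic consistency: one of its stated aims is to show that every calculation can be carried out entirely in the NWP coordinate space, and its proof of \eqref{t2e2} deliberately stays there to make that point. Your momentum-space argument is shorter but sidesteps precisely the feature the paper wants to exhibit, namely that the non-covariant kernel introduced by the time translation of $\tilde{a}$ is absorbed back into $\phi_1$.
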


\begin{proof}
	By using Lemmas \ref{l63} and \ref{l622} we deduce,
		\begin{align*}   	U(y,\Lambda_R)  \tilde{a}(\mathbf{x}) U(y,\Lambda_R)^{-1}  &=  	U(y,\mathbb{I})	U(0,\Lambda_R)\tilde{a}(\mathbf{x})U(0,\Lambda_R)^{-1}	U(y,\mathbb{I})^{-1}		
	\\&=		 	  	U(y,\mathbb{I}) \tilde{a}(R\mathbf{x})
		U(y,\mathbb{I})^{-1}	 
		\\&=   \frac{i y^{0}}{ \pi^2} \left(
		\frac{	 1}{\vert \mathbf{Rx+y} \vert^{2}-(y^{0})^2 } \right)^2\ast\tilde{a} (\mathbf{Rx+y}) ,	
		\end{align*} 
		where in the last lines we used the fact that a general Poincar\'e transformation can be split in first Lorentz-transformation and then space-time translation. An analog proof can be done for the creation operator $\tilde{a}^{*}$ or we can use the fact that we have unitary operators generating the transformations and take the adjoint of the former equality. \\\\ In order to calculate the adjoint action of the operator $\phi_1(x)$, we use the fact that time translations commute with the unitary operator of the Poincar\'e group for pure rotations, i.e.
		\begin{align*}
			U(y,\Lambda_R) \phi_1(x)   U(y,\Lambda_R)^{-1}& =
				U(y,\Lambda_R)U(x_0,\mathbb{I}) \tilde{a}(\mathbf{x})   U(x_0,\mathbb{I})^{-1}U(y,\Lambda_R)^{-1}\\&=U(x_0,\mathbb{I})
				U(y,\Lambda_R) \tilde{a}(\mathbf{x})   U(y,\Lambda_R)^{-1}U(x_0,\mathbb{I})^{-1}\\&=
				U(x_0+y_0,\mathbb{I})
				U(\mathbf{y},\mathbb{I}) \tilde{a}(R\mathbf{x})   U(\mathbf{y},\mathbb{I})^{-1}U(x_0+y_0,\mathbb{I})^{-1}
				\\&=
				U(x_0+y_0,\mathbb{I})  \tilde{a}(R\mathbf{x+y})    U(x_0+y_0,\mathbb{I})^{-1}= \phi_1(\Lambda_Rx+y).
		\end{align*}

\end{proof}
Next, we investigate the  adjoint action w.r.t. the dilatations. 
	In order to  calculate the adjoint action of   dilatations on the NWP-space we first calculate the commutator of the dilatation operator with the coordinate operator. From that we can deduce the adjoint action of the dilatations acting on the coordinate ladder operators. This construction is analog to the one  in  \cite[Chapter 9.4]{SU}.
	\begin{lemma}
The adjoint action w.r.t. the dilatations acting  on  the NWP-operator is given as follows,
	\begin{align*}
	e^{i\alpha D} X_j e^{-i\alpha D}=e^{-\alpha}X_j,
	\end{align*}
where in order to perform the calculations of the adjoint action the following commutator relation was used,
	\begin{align*}
	[D,X_{j}]=iX_{j}.
	\end{align*} 
From the former equation it follows that $X_{j}e^{-i\alpha D}|\mathbf{x}\rangle$ is an eigenvector of 
$X_{j}$ with eigenvalue $e^{-\alpha}x_{j}$. Therefore, one concludes that the transformation behavior of the operator $\tilde{a}(\mathbf{x})$  under dilatations is
	\begin{align*}
	e^{i\alpha D}\tilde{a}(\mathbf{x})	e^{-i\alpha D}=
e^{ \frac{3}{2}\alpha}\,\tilde{a}(e^{ \alpha}\mathbf{x}).
	\end{align*}
 
	\end{lemma}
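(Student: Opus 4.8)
The plan is to treat all three identities as consequences of the fact that both $D$ and $X_j$ are second quantizations of explicit one-particle differential operators, so that every commutator reduces to a one-particle computation which is then exponentiated by the Hadamard (Baker--Campbell--Hausdorff) lemma. The logical order matches the three displayed claims: first the commutator $[D,X_j]=iX_j$, then the finite adjoint action on $X_j$, and finally the transformation of $\tilde a(\mathbf{x})$.

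First I would establish $[D,X_j]=iX_j$. Using the coordinate-space representations $D=i\int d^3\mathbf{x}\,\tilde a^*(\mathbf{x})(\tfrac32+x^l\partial_{x^l})\tilde a(\mathbf{x})$ and $X_j=\int d^3\mathbf{x}\,x_j\,\tilde a^*(\mathbf{x})\tilde a(\mathbf{x})$ from the preceding lemmas, together with $[\tilde a(\mathbf{x}),\tilde a^*(\mathbf{y})]=\delta^n(\mathbf{x}-\mathbf{y})$, I would compute the bracket of these bilinear expressions. Equivalently, recognizing $D=d\Gamma(\hat d)$ with one-particle generator $\hat d=i(\tfrac32+x^l\partial_{x^l})$ and $X_j=d\Gamma(\hat x_j)$ with $\hat x_j$ multiplication by $x_j$, I would invoke $[d\Gamma(A),d\Gamma(B)]=d\Gamma([A,B])$ and reduce to the single-particle bracket $[\hat d,\hat x_j]\varphi = i[(\tfrac32+x^l\partial_l)(x_j\varphi)-x_j(\tfrac32+x^l\partial_l)\varphi]=i\,x_j\varphi=i\hat x_j\varphi$, where the $\tfrac32$ term and the part of $x^l\partial_l$ acting on $\varphi$ cancel and only the derivative hitting the factor $x_j$ survives. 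Second quantizing gives $[D,X_j]=iX_j$.

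From here the adjoint action on $X_j$ is pure algebra: iterating yields $\mathrm{ad}_D^n(X_j)=i^nX_j$, so the Hadamard series gives $e^{i\alpha D}X_je^{-i\alpha D}=\sum_n\frac{(i\alpha)^n}{n!}i^nX_j=e^{-\alpha}X_j$. For the annihilation operator I would not rely on the eigenstate heuristic quoted in the statement (which only fixes that the argument rescales to $e^{-\alpha}\mathbf{x}$ and leaves the normalization open), but compute directly $[D,\tilde a(\mathbf{x})]=-i(\tfrac32+x^l\partial_{x^l})\tilde a(\mathbf{x})$ from the canonical relations, set $g(\mathbf{x},\alpha)=e^{i\alpha D}\tilde a(\mathbf{x})e^{-i\alpha D}$, and differentiate in $\alpha$ to obtain the transport equation $\partial_\alpha g=(\tfrac32+x^l\partial_{x^l})g$ with $g(\mathbf{x},0)=\tilde a(\mathbf{x})$. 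Since the one-parameter group generated by $\tfrac32+x^l\partial_l$ acts on functions of $\mathbf{x}$ by $(e^{\alpha(\frac32+x^l\partial_l)}f)(\mathbf{x})=e^{\frac32\alpha}f(e^\alpha\mathbf{x})$, the unique solution is $g(\mathbf{x},\alpha)=e^{\frac32\alpha}\tilde a(e^\alpha\mathbf{x})$, as claimed; the creation-operator statement then follows by taking adjoints.

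The step I expect to be most delicate is pinning down the prefactor $e^{\frac32\alpha}$. The eigenstate argument only determines the rescaling of the position eigenvalue, so the normalization has to be extracted separately, and it is precisely the constant $\tfrac32=\tfrac n2$ term in $D$ — the scaling dimension of the field in $n=3$ — that produces the multiplicative factor; this must be tracked carefully through the transport equation rather than guessed. A secondary point requiring care is the domain: these formal manipulations — the interchange of $\partial_\alpha$ with the conjugation and the Hadamard expansion — are justified on $\bigotimes_{i=1}^k\mathscr{S}(\mathbb{R}^n)$ because the dilatation generator maps the Schwartz space into itself.
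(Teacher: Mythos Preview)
Your argument is correct. The commutator computation via $[d\Gamma(A),d\Gamma(B)]=d\Gamma([A,B])$ and the one-particle bracket $[\hat d,\hat x_j]=i\hat x_j$ is equivalent to the paper's direct Fock-space calculation, just packaged more cleanly; and the Hadamard series for $e^{i\alpha D}X_je^{-i\alpha D}$ is exactly what the paper does.

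Where you genuinely diverge is in deriving the transformation of $\tilde a(\mathbf{x})$. The paper follows the eigenstate heuristic literally: it argues that $e^{-i\alpha D}|\mathbf{x}\rangle$ lies in the eigenspace at $e^{-\alpha}\mathbf{x}$, writes the Ansatz $e^{i\alpha D}\tilde a^*(\mathbf{x})e^{-i\alpha D}=f(\alpha)\,\tilde a^*(e^{\alpha}\mathbf{x})$, and then fixes $f(\alpha)=e^{3\alpha/2}$ a posteriori by inserting the Ansatz into the Fock representation of $X_j$ (and cross-checks with the particle-number operator $N$). Your route --- compute $[D,\tilde a(\mathbf{x})]=-i(\tfrac32+x^l\partial_l)\tilde a(\mathbf{x})$ directly from the CCR, then solve the resulting first-order flow equation --- is more self-contained: the prefactor $e^{3\alpha/2}$ falls out of the transport equation rather than being matched against a separate consistency condition. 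The paper's method has the virtue of making the geometric content (rescaling of position eigenstates) explicit and of furnishing an independent sanity check via $N$; your method is shorter and avoids the two-step Ansatz-then-normalize structure. Either way the identification of $\tfrac32$ with the scaling weight $n/2$ is the crux, and you have isolated it correctly.
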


	\begin{proof}
		In order to proof the transformational behavior we first calculate the commutator of the dilatation operator with the NWP-operator, 
		\begin{align*}
		[D,X_{j}]&=
		i	\iint d^3\mathbf{x}\,d^3\mathbf{y}\,y_j\,[\tilde{a}^*(\mathbf{x})\left(\frac{3}{2}+ x^{l}\frac{\partial}{\partial x^{l}}\right)\tilde{a}(\mathbf{x}) ,  \tilde{a}^*(\mathbf{y}) \tilde{a}(\mathbf{y})]\\&=
		i	\iint d^3\mathbf{x}\,d^3\mathbf{y}\,y_j\,x^{l}\,\underbrace{[\tilde{a}^*(\mathbf{x})  \frac{\partial}{\partial x^{l}} \tilde{a}(\mathbf{x}) ,  \tilde{a}^*(\mathbf{y}) \tilde{a}(\mathbf{y})]}_{-\delta(\mathbf{x}-\mathbf{y})\tilde{a}^*(\mathbf{y})\frac{\partial}{\partial x^{l}} \tilde{a}(\mathbf{x})+\tilde{a}^*(\mathbf{x}) \tilde{a}(\mathbf{y})\frac{\partial}{\partial x^{l}}\delta(\mathbf{x}-\mathbf{y})} 	 
			\\&	=-i\int d^3x \left(x_{j}\,x^l \tilde{a}^*(\mathbf{x})\frac{\partial}{\partial x^{l}}\tilde{a}(\mathbf{x})+
			  \,x_j\,\left(\frac{\partial}{\partial x^{l}}\left(x^l\tilde{a}^*(\mathbf{x})\right) \right) \tilde{a}(\mathbf{x})\right)
				\\&=		-i\int d^3x \left(x_{j}\,x^l \tilde{a}^*(\mathbf{x})\frac{\partial}{\partial x^{l}}\tilde{a}(\mathbf{x})-x^l\tilde{a}^*(\mathbf{x})
			\left(\frac{\partial}{\partial x^{l}}\left(	 x_j\, \tilde{a}(\mathbf{x})\right) \right)\right)
				\\&=	 i\int d^3x \,x_j\, 	\tilde{a}^*(\mathbf{x})\tilde{a}(\mathbf{x}),
		\end{align*}
		where in the last lines we used the vanishing commutator of the particle number operator with the coordinate operator. Moreover, we integrated over delta distributions and performed partial integrations. \\\\		From the former commutator relation we obtain for the adjoint action of the NWP-operator w.r.t.   dilatations the following, 
		\begin{align*}
		 e^{i\alpha D}X_{j}e^{-i\alpha D}&=X_{j}+i\alpha [D,X_{j}]+\frac{i^2}{2!}\alpha^2 [D,[D,X_{j}]]+\cdots\\
		&= X_{j}+i^2\alpha\,X_{j}+\frac{i^4}{2!}\alpha^2X_{j}+\cdots
		\\&= \sum\limits_{n=0}^{\infty}\frac{(i^2\alpha)^n}{n!}X_{j}=e^{-\alpha}X_j,
		\end{align*}
		where in the last lines we used the commutator relation between the dilatation and the NWP-operator and the Baker-Campbell-Hausdorff formula. As in  \cite[Chapter 9.4]{SU} we look at the following expression,
			\begin{align*}
			  X_{j}\,e^{-i\alpha D}|\mathbf{x}\rangle &=e^{-i\alpha D}e^{i\alpha D} X_{j}\,e^{-i\alpha D}|\mathbf{x}\rangle\\&= e^{-i\alpha D}e^{-\alpha}X_j|\mathbf{x}\rangle=e^{-\alpha}x_j\,e^{-i\alpha D}|\mathbf{x}\rangle.
			\end{align*}
		Therefore $X_{j}e^{-i\alpha D}|\vec{x}\rangle$ is an eigenvector of $X_{j}$ for
		the eigenvalue $e^{-\alpha}x_j$, 
		thus contained in the
		eigenspace $\mathcal{H}_{e^{-\alpha}\vec{x} }$ of $X_{j}$. Hence,   the transformation on the creation operator of the NWP-space has to have the following form, 
		\begin{align}\label{daf}
		e^{i\alpha D}\tilde{a}^*(\mathbf{x})	e^{-i\alpha D}=
		f(\alpha)\,\tilde{a}^*(e^{ \alpha}\mathbf{x}),
		\end{align}
		where $f(\alpha)$ is a real-valued function of the transformation parameter $\alpha$. To find the concrete form of the function $f$ we calculate the adjoint action of the coordinate operator
		w.r.t. the dilatations but in the  Fock representation of the NWP-space, i.e. 
	\begin{align*}
	e^{i\alpha D}X_j	e^{-i\alpha D}&= e^{i\alpha D}\,\int d^3 \mathbf{x}\,x_j\,
	\tilde{a}^*( \mathbf{x})	\tilde{a} ( \mathbf{x})	e^{-i\alpha D}\\&=
	\int d^3 \mathbf{x}\,x_j\,
	f(\alpha)^2\tilde{a}^*(e^{ \alpha}\mathbf{x})\tilde{a} (e^{ \alpha}\mathbf{x})\\&=e^{-\alpha}
	\int d^3 \mathbf{x}\,e^{ -3\alpha }\,x_j\,
	f(\alpha)^2\tilde{a}^*( \mathbf{x})\tilde{a} ( \mathbf{x}), 
	\end{align*}
		where in order for the NWP-operator to transform correctly the function $f$ has to be chosen as  $f(\alpha)= e^{ \frac{3}{2}\alpha }$. In order to solidify  our approach   we apply the adjoint action on the  particle number operator   in the coordinate space, 
		\begin{align*}
 e^{i\alpha D}N	e^{-i\alpha D}&= e^{i\alpha D}\,\int d^3 \mathbf{x}\,
	\tilde{a}^*( \mathbf{x})	\tilde{a} ( \mathbf{x})	e^{-i\alpha D}\\&=
	 \int d^3 \mathbf{x}\,
		f(\alpha)^2\tilde{a}^*(e^{ \alpha}\mathbf{x})\tilde{a} (e^{ \alpha}\mathbf{x})\\&=
		\int d^3 \mathbf{x}\,e^{ -3\alpha }
		f(\alpha)^2\tilde{a}^*( \mathbf{x})\tilde{a} ( \mathbf{x})=N, 
		\end{align*}
	where in the last lines a variable substitution was performed and the commutator relation $[D,N]=0$ is used to argue that the adjoint action leaves the particle number operator invariant and therefore the Ansatz (\ref{daf}) with explicit form of $f$  gives us the right transformational behavior. 
	\end{proof}
	 To calculate the adjoint action of dilatations on the ladder operators of the NWP-space we could have, as well, taken the well-known transformation property of the ladder operators in   momentum space. After doing so we could have performed an inverse Fourier-transformation. However, one of the purposes of this work is to demonstrate to the reader that all calculations can as well be done  in  the NWP-space. In particular, the NWP-space is on an equal footing with the momentum space. \\\\
Since we are interested in transformations of the  time translated expression of the ladder operators we need to take the commutation relations between the momentum and dilatation operators into account. 
\begin{theorem}\label{t3}
	The  operator  $\phi_1(x)$ that is responsible for calculating probabilities of finding $k$-particles in $k$-spatial positions   transforms under dilatations  as follows, 
	\begin{align*}
e^{i\alpha D}\phi_1(x)e^{-i\alpha D}=e^{\frac{3}{2} \alpha}\phi_1(e^{\alpha}x).
	\end{align*}
\end{theorem}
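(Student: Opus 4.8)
The plan is to work in the momentum-space representation of $\phi_1$ recorded in Proposition \ref{sf}, namely $\phi_1(x)=(2\pi)^{-3/2}\int d^3\mathbf{p}\,e^{-ip\cdot x}\,a(\mathbf{p})$, and to push the two exponentials $e^{\pm i\alpha D}$ inside the momentum integral. Because $D$ commutes past the c-number phase $e^{-ip\cdot x}$, the whole theorem reduces to the single-operator transformation law
\begin{align*}
e^{i\alpha D}a(\mathbf{p})e^{-i\alpha D}=e^{-\frac{3}{2}\alpha}\,a(e^{-\alpha}\mathbf{p}),
\end{align*}
after which only an elementary change of variables remains.

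First I would establish this law. The most economical route is to Fourier-transform the coordinate-space result of the preceding lemma: inserting the inverse relation $a(\mathbf{p})=(2\pi)^{-3/2}\int d^3\mathbf{x}\,e^{ip_k x^k}\tilde{a}(\mathbf{x})$ from Definition \ref{sec2.2} and substituting $e^{i\alpha D}\tilde{a}(\mathbf{x})e^{-i\alpha D}=e^{\frac{3}{2}\alpha}\tilde{a}(e^{\alpha}\mathbf{x})$, the substitution $\mathbf{x}\mapsto e^{\alpha}\mathbf{x}$ (Jacobian $e^{-3\alpha}$, and the phase becoming $e^{i(e^{-\alpha}p)_k x^k}$) collapses $e^{\frac{3}{2}\alpha}e^{-3\alpha}$ into $e^{-\frac{3}{2}\alpha}$ and rescales the momentum argument, yielding the claimed law. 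Equivalently, one computes $[D,a(\mathbf{p})]=i\big(\tfrac{3}{2}+p^{l}\tfrac{\partial}{\partial p^{l}}\big)a(\mathbf{p})$ directly from the momentum-space form of $D$ and integrates the resulting first-order flow along the characteristics $\mathbf{p}\mapsto e^{-\alpha}\mathbf{p}$.

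With the single-operator law in hand, I would substitute it into the integral for $\phi_1$ and perform the change of variables $\mathbf{p}\mapsto e^{\alpha}\mathbf{p}$. The bookkeeping is then just the collection of three scaling factors: the measure gives $d^3\mathbf{p}\mapsto e^{3\alpha}d^3\mathbf{p}$, the transformation law carries $e^{-\frac{3}{2}\alpha}$, and the phase $e^{-ip\cdot x}$ is recast as $e^{-ip\cdot(e^{\alpha}x)}$. Multiplying the prefactors $e^{3\alpha}e^{-\frac{3}{2}\alpha}=e^{\frac{3}{2}\alpha}$ and recognizing the remaining integral as $\phi_1(e^{\alpha}x)$ delivers the assertion.

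The one genuinely delicate point — and the place where masslessness is indispensable — is the behaviour of the phase $p\cdot x=\omega_{\mathbf{p}}x^{0}-\vec p\cdot\vec x$ under $\mathbf{p}\mapsto e^{\alpha}\mathbf{p}$. Since $\omega_{\mathbf{p}}=|\mathbf{p}|$, the energy scales homogeneously with the spatial momentum, so the full on-shell four-momentum obeys $p\cdot x=p'\cdot(e^{\alpha}x)$ after the substitution; this is exactly the identity that lets the dilatation act on the space-time argument of $\phi_1$ as $x\mapsto e^{\alpha}x$. For the massive dispersion $\omega_{m,\mathbf{p}}=\sqrt{\vec p^{\,2}+m^{2}}$ this homogeneity fails, which is the structural reason dilatations are not a symmetry in the massive case, so I would flag this step as the one to verify most carefully.
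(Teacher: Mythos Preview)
Your argument is correct. The momentum-space transformation law $e^{i\alpha D}a(\mathbf{p})e^{-i\alpha D}=e^{-\frac{3}{2}\alpha}a(e^{-\alpha}\mathbf{p})$ follows exactly as you indicate from the preceding lemma, and the change of variables together with the homogeneity $\omega_{e^{\alpha}\mathbf{p}}=e^{\alpha}\omega_{\mathbf{p}}$ in the massless case produces the stated result.

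The paper, however, takes a deliberately different route: it writes $\phi_1(x)=U(x^{0},\mathbb{I})\,\tilde{a}(\mathbf{x})\,U(-x^{0},\mathbb{I})$ via Proposition~\ref{sf} and then conjugates each factor separately. The transformation of $\tilde{a}(\mathbf{x})$ is supplied by the preceding lemma, while the transformation of the time-translation operators is obtained by computing $e^{i\alpha D}P_{0}e^{-i\alpha D}=e^{\alpha}P_{0}$ \emph{directly in the coordinate (NWP) representation}, using the explicit form of $P_{0}$ as a convolution kernel $\tilde{\omega}(\mathbf{x}-\mathbf{y})\propto|\mathbf{x}-\mathbf{y}|^{-4}$ and a variable substitution in the double $\mathbf{x},\mathbf{y}$ integral. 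In the paper's setup, masslessness enters through the homogeneity degree of this kernel rather than through the on-shell phase. Your momentum-space derivation is shorter and more transparent about where the mass zero condition is used; the paper's approach is chosen to underline its theme that every computation can be carried out entirely within the NWP space, without reverting to momentum variables.
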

\begin{proof}
We can  rewrite the operator  $\phi_1(x)$ as 
	\begin{align*}
\phi_1(x)= 	U(x^{0},\mathbb{I}) \tilde{a} (	 \mathbf{x}) U(-x^{0},\mathbb{I}).
	\end{align*}
Since we know the adjoint action on the ladder operators in the NWP-space we can directly calculate the transformation on all expressions. First we calculate the adjoint action of the dilatations on the zero component of the momentum operator,
	\begin{align*}
e^{i\alpha D}\,P_0\,e^{-i\alpha D}&= -\pi^{-2}\Gamma(2)
\iint
d^3 \mathbf{x}\,d^3 \mathbf{y}\, 
\frac{1}{|\mathbf{x}-\mathbf{y}|^{ 4} } 
e^{i\alpha D}\,\tilde{a}^*(\mathbf{x}) \, e^{i\alpha D}e^{-i\alpha D}\,
\tilde{a} (\mathbf{y} )\,e^{-i\alpha D}\\&
= -\pi^{-2}\Gamma(2)
\iint
d^3 \mathbf{x}\,d^3 \mathbf{y}\, 
\frac{1}{|\mathbf{x}-\mathbf{y}|^{ 4} } \, e^{ 3\alpha}\,
 \tilde{a}^*(e^{ \alpha}\mathbf{x})  \,
\tilde{a} (e^{ \alpha}\mathbf{y} ) \\&
= e^{\alpha} P_{0},
	\end{align*}
where in the last lines we used the explicit transformation given in Equation (\ref{daf}) and performed a variable substitution.  Next,  we apply the adjoint action of   dilatations on this operator and use the fact that we are dealing with unitary transformations, i.e.
\begin{align*}
e^{i\alpha D}U(x^{0},\mathbb{I}) \tilde{a} (	 \mathbf{x}) U(-x^{0},\mathbb{I})e^{-i\alpha D} &=e^{\frac{3}{2} \alpha}
U(e^{\alpha}x^{0},\mathbb{I}) \tilde{a} (e^{ \alpha}  \mathbf{x}) U(-e^{\alpha}x^{0},\mathbb{I})\\&= \frac{i	e^{\frac{5}{2} \alpha}  x^{0}}{2\pi^2}
\int
d^{3} \mathbf{z}   \left(
\frac{	 1}{\vert  e^{ \alpha}\mathbf{x-z} \vert^{2}-(e^{\alpha}x^{0})^2 } \right)^2\tilde{a}(\mathbf{z}).
\end{align*}
\end{proof}
By using the former lemmas we calculate the adjoint action of translations, rotations and dilatations on $\phi_1(x)$. This is in particular important since the operator   $\phi_1(x)$ is used to calculate probability amplitudes of finding $k$-particles in $k$-spatial positions. Hence, by knowing the transformational behavior, two frames that are connected via translations, rotations and/or dilatations can calculate the same probability by taking the proper transformation into account.
	\begin{theorem}
		The  operator	  $\phi_1(x)$ transforms covariantly  under space-\textbf{time} translations,  pure rotations and dilatations, i.e. 
		\begin{align*}&
	e^{i\alpha D}	U(y,\Lambda_R) \phi_1(x)U(y,\Lambda_R)^{-1}	e^{-i\alpha D}
		= e^{\frac{3}{2} \alpha}\phi_1(	e^{\alpha}(\Lambda_R x+ y))
  ,
		\end{align*} 
		with $(y,\Lambda_{R}) \in \mathscr{P}^{\uparrow}_{+}$ and $\alpha\in\mathbb{R}$.
		Moreover, the explicit transformation   of the Fourier-transformed operator $\tilde{a}$ under the action of the subgroup $(y,\Lambda_{R}) \in \mathscr{P}^{\uparrow}_{+}$ is in coordinate space given as, 
		\begin{align*}&
		e^{i\alpha D}		U(y,\Lambda_R)\tilde{a}(\mathbf{x})U(y,\Lambda_R)^{-1}e^{-i\alpha D}
		= c_{y' } \left(
		\frac{	 1}{\vert  \mathbf{Rx}+ \mathbf{y} \vert^{2}-( y^{0})^2 } \right)^2\ast\tilde{a} (e^{\alpha}(\mathbf{Rx}+ \mathbf{y})  ),
		\end{align*} 
		where $c_{y' }:= \frac{i  e^{-\frac{3}{2}\alpha} y^{0}}{ \pi^2}$, $\ast$ denotes the convolution and $y\in\mathbb{R}^4$.
 
	\end{theorem}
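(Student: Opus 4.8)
The plan is to derive both identities by composing the two conjugations, exploiting that the adjoint action of the product $e^{i\alpha D}U(y,\Lambda_R)$ factors as the Poincar\'e conjugation by $U(y,\Lambda_R)$ (the innermost operation) followed by the dilatation conjugation by $e^{i\alpha D}$. All the required building blocks are already available: Theorem \ref{t2} controls the translation-and-rotation step, while Theorem \ref{t3} together with the dilatation law $e^{i\alpha D}\tilde{a}(\mathbf{x})e^{-i\alpha D}=e^{\frac{3}{2}\alpha}\tilde{a}(e^{\alpha}\mathbf{x})$ controls the outer step.

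For the covariance of $\phi_1$ no fresh computation is needed. Applying Theorem \ref{t2} to the inner conjugation gives $U(y,\Lambda_R)\phi_1(x)U(y,\Lambda_R)^{-1}=\phi_1(\Lambda_R x+y)$, and feeding $\Lambda_R x+y$ into Theorem \ref{t3} yields
\begin{align*}
e^{i\alpha D}\phi_1(\Lambda_R x+y)e^{-i\alpha D}=e^{\frac{3}{2}\alpha}\,\phi_1\big(e^{\alpha}(\Lambda_R x+y)\big),
\end{align*}
which is the first displayed identity.

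For the explicit $\tilde{a}$-formula I would start from the integral form of Theorem \ref{t2},
\begin{align*}
U(y,\Lambda_R)\tilde{a}(\mathbf{x})U(y,\Lambda_R)^{-1}=\frac{iy^{0}}{\pi^2}\int d^3\mathbf{z}\,\left(\frac{1}{|\mathbf{Rx+y}-\mathbf{z}|^{2}-(y^{0})^2}\right)^{2}\tilde{a}(\mathbf{z}),
\end{align*}
and conjugate it by $e^{i\alpha D}$. Since the kernel is a c-number it commutes through $e^{i\alpha D}$, so the dilatation acts only on $\tilde{a}(\mathbf{z})$, producing $e^{\frac{3}{2}\alpha}\tilde{a}(e^{\alpha}\mathbf{z})$. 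Substituting $\mathbf{z}\mapsto e^{-\alpha}\mathbf{z}$ then trades $\tilde{a}(e^{\alpha}\mathbf{z})$ for $\tilde{a}(\mathbf{z})$ at the cost of a Jacobian $e^{-3\alpha}$ and rescales the kernel, which I would read off from the elementary identity
\begin{align*}
|\mathbf{Rx+y}-e^{-\alpha}\mathbf{z}|^{2}-(y^{0})^2=e^{-2\alpha}\Big(|e^{\alpha}(\mathbf{Rx+y})-\mathbf{z}|^{2}-(e^{\alpha}y^{0})^2\Big).
\end{align*}

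The one place that needs genuine care is the bookkeeping of the $e^{\alpha}$-powers. Three sources of scaling enter — the factor $e^{\frac{3}{2}\alpha}$ from the ladder-operator dilatation law, the Jacobian $e^{-3\alpha}$ from the measure, and the factor coming from the kernel identity above — and I would verify that the first two combine to the prefactor $c_{y'}=\tfrac{i e^{-\frac{3}{2}\alpha}y^{0}}{\pi^2}$ while the last re-centres the convolution at $e^{\alpha}(\mathbf{Rx+y})$, matching the stated right-hand side. A consistency check worth running is that this $\tilde{a}$-level result is the exact shadow of the $\phi_1$-covariance: recalling $e^{i\alpha D}P_{0}e^{-i\alpha D}=e^{\alpha}P_{0}$ from the proof of Theorem \ref{t3}, the dilatation also rescales the time entering $\phi_1$, so the scaled argument $e^{\alpha}(\Lambda_R x+y)$ and the weight $e^{\frac{3}{2}\alpha}$ of the first identity are reproduced once the convolution in the second identity is written out and its centre is tracked through the substitution.
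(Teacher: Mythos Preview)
Your approach is essentially the paper's: for the $\phi_1$-identity both you and the author apply Theorem \ref{t2} first and then Theorem \ref{t3}, and for the explicit $\tilde a$-formula both arguments amount to writing out the convolution form of Theorem \ref{t2} and pushing the dilatation through. The paper in fact gives fewer details on the second identity than your sketch does---it simply expands $e^{\frac32\alpha}\phi_1(e^{\alpha}(\Lambda_R x+y))$ via Proposition \ref{sf} and Lemma \ref{l63} and reads off the convolution. One small caution on your bookkeeping: the kernel identity you wrote does not merely ``re-centre'' but also rescales the time parameter in the denominator to $(e^{\alpha}y^0)^2$ and contributes an extra $e^{4\alpha}$, so the three scaling sources do not neatly separate into ``prefactor'' versus ``centre'' the way you describe; the paper's convolution notation here is loose enough that this discrepancy is essentially cosmetic, but be aware of it when writing out the final line.
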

\begin{proof}
In order to prove the first part of the theorem we use 	the former  Lemmas and the algebra of the Poincar\'e group. First, we examine how the operator $\phi_1(x)$ transforms under space-time translations and pure rotations,
	\begin{align*}
	&  e^{i\alpha D}		U(y,\Lambda_R) \phi_1(x)U(y,\Lambda_R)^{-1}	e^{-i\alpha D}\\ & =	e^{i\alpha D}	\phi_1(\Lambda_R x+y )	e^{-i\alpha D} \\& =
	e^{\frac{3}{2}\alpha}\phi_1(e^{\alpha}\Lambda_R x+e^{\alpha}y )
\\&=	 
e^{\frac{3}{2}\alpha}U(e^{\alpha}(x_0+   y_0),\mathbb{I}) \tilde{a}(e^{\alpha}(\mathbf{Rx+ y}))U(-e^{\alpha}(x_0+   y_0),\mathbb{I}) \\&=e^{\frac{3}{2}\alpha}
c_{y^{''} } \left(
\frac{	 1}{\vert e^{\alpha}(\mathbf{Rx+ y}) \vert^{2}-(e^{\alpha}(x_0+   y_0))^2 } \right)^2\ast\tilde{a} (e^{\alpha}(\mathbf{Rx+ y})),
\end{align*}
where $c_{y^{''} }:=\frac{i e^{\alpha}(x_0+   y_0)}{ \pi^2}$ and 
	where for the calculation we used the transformation property given in Theorem \ref{t3}, the dilatation action Equation (\ref{t2e2})
  and the explicit time translation given in Equation (\ref{eq51}).
\end{proof}
 
\section{Conclusion and Outlook}
In this work we supported the point of view, that the NWP operator and its respective states have a physical meaning.  In particular, we defined  the second-quantized version  of the NWP-operator for a massless scalar field and wrote the operator in its respective eigenbase. The meaning of the second-quantized operator is given by its respective eigenfunctions. They are used to calculate the probability amplitude of finding particles at certain spatial positions for a fixed time. Furthermore, by transforming the whole conformal group into the NWP-eigenbase we showed that this space can be treated and worked on at equal footing as the momentum space. 
\\\\
Although   eigenstates of this operator are non-covariant w.r.t. the Poincar\'e group we showed indirectly  that combinations of these states are covariant. This is proven by the fact that covariant operators remain covariant,  independently  of their respective representation. Hence, even though we work with non-covariant eigenstates, symmetries of the theory remain untouched. This is, in our opinion, another strong argument why the coordinate space, i.e. the NWP-space,  is equally entitled  from a physical point of view as the momentum space. 
\\\\
Moreover, we showed that the operator generating the NWP-space, i.e. $\phi_1(x)$ transforms in a covariant manner under the subgroup of space-time translations, rotations and dilatations of the conformal group, namely  
	\begin{align*}&
	e^{i\alpha D}	U(y,\Lambda_R) \phi_1(x)U(y,\Lambda_R)^{-1}	e^{-i\alpha D}
	= e^{\frac{3}{2} \alpha}\phi_1(	e^{\alpha}(\Lambda_R x+ y))
	.
	\end{align*}
Hence, given two observers that are related  by space-time translations, rotations or/and dilatations, they both measure the same  probability amplitude, by taking the respective covariant transformation into account. \\\\	
Furthermore, we were able to supply an additional argument to point out the physical relevance of the NWP-operator. Namely, it is possible, even in a second-quantized context, to express the generators 
of the  conformal group by using the NWP-operator and the relativistic momentum. In our opinion this supplies a strong argument for the physical sense of the NWP-operator, since the conformal group can be written in terms of products of the coordinate and momentum operators.  \\\\
	Since we formulated the whole framework of a free massless scalar field, its respective transformations and symmetries into the NWP-space, the next step in progress is to understand localization from a non-commutative point of view.  In particular, by imposing a space-space non-commutativity the subject of interest is to be able to calculate the probability of finding a particle at a certain mean value of a spatial position. In our opinion, the ground work for such investigations was presented in this paper. 

	\section*{Aknowledgments}
	The author is indebted to  Prof. K. Sibold for initiating   deep conceptual questions of this work.  Furthermore, we would like to thank Prof. K. Sibold  and S. Pottel for many fruitful discussions during different stages of this paper. With regards to the domains of the Lorentz-boosts am indebted to Prof. Norbert Dragon.  The   corrections of Dr. Z. Much are thankfully acknowledged.

\bibliographystyle{alpha}
\bibliography{allliterature1}

\begin{thebibliography}{DFMS97}

\bibitem[DFMS97]{DMS}
P.~Di~Francesco, P.~Mathieu, and D.~S{\'e}n{\'e}chal.
\newblock {\em Conformal Field Theory}.
\newblock Graduate Texts in Contemporary Physics. Springer, 1997.

\bibitem[Dra12]{ND}
N.~Dragon.
\newblock {\em The Geometry of Special Relativity - A Concise Course}.
\newblock SpringerBriefs in Physics. Springer Berlin Heidelberg, 2012.

\bibitem[Dys90]{FD}
F.~J. Dyson.
\newblock {Feynman's proof of the Maxwell equations}.
\newblock {\em Am.J.Phys.}, 58:209--211, 1990.

\bibitem[Fle5a]{F1}
G.~N. Fleming.
\newblock {Covariant Position Operators, Spin, and Locality}.
\newblock {\em Physical Review}, 137:B188--197, 1965a.

\bibitem[Fre06]{Fr}
K.~Fredenhagen.
\newblock {Quantum field theory, Lecture notes}.
\newblock 2006.

\bibitem[GS64]{GS1}
I.M. Gel'fand and G.E. Shilov.
\newblock {\em Generalized functions}.
\newblock Generalized Functions. Academic Press, 1964.

\bibitem[{Hal}00]{H2}
H.~{Halvorson}.
\newblock {Reeh-Schlieder Defeats Newton-Wigner: On alternative localization
  schemes in relativistic quantum field theory}.
\newblock {\em arXiv:quant-ph/0007060}, July 2000.

\bibitem[Her10]{H1}
Lars~Oliver Herrmann.
\newblock Localization in relativistic quantum theories.
\newblock June 2010.

\bibitem[IZ12]{IZ}
C.~Itzykson and J.B. Zuber.
\newblock {\em Quantum Field Theory}.
\newblock Dover Books on Physics. Dover Publications, 2012.

\bibitem[Muc]{Muc7}
Albert Much.
\newblock {Remarks on QFT in the Coordinate Space}.
\newblock {\em arXiv:quant-ph/1609.01687}.

\bibitem[Muc13]{Muc2}
Albert Much.
\newblock {Quantum Spacetime from QM and QFT}.
\newblock {\em PhD thesis}, 2013.

\bibitem[Muc14]{MUc1}
Albert Much.
\newblock {Quantum Mechanical Effects from Deformation Theory}.
\newblock {\em Journal of Mathematical Physics}, 55(8):082303, February 2014.

\bibitem[Muc15]{Muc3}
Albert Much.
\newblock {Relativistic Corrections to the Moyal-Weyl Spacetime}.
\newblock {\em Journal of Mathematical Physics}, 56(2):022301, February 2015.

\bibitem[NW49]{NW49}
T.~D. Newton and E.~P. Wigner.
\newblock {Localized states for elementary systems}.
\newblock {\em Rev. Mod. Phys.}, 21:400--406, 1949.

\bibitem[Pry48]{PR48}
M.~H.~L. Pryce.
\newblock {The mass-centre in the restricted theory of relativity and its
  connexion with the quantum theory of elementary particles}.
\newblock {\em Proc. Royal Soc. London, Ser. A}, 195:62, 1948.

\bibitem[RS75]{RS2}
M.~Reed and B.~Simon.
\newblock {\em {Methods of Modern Mathematical Physics. 2. Fourier Analysis,
  Selfadjointness}}.
\newblock {Gulf Professional Publishing}, 1975.

\bibitem[SB10]{SE}
Klaus Sibold and Eden Burkhard.
\newblock {Conformal algebra on Fock space and conjugate pairs of operators}.
\newblock {\em Phys.Rev.}, D82:104012, 2010.

\bibitem[Sch61]{Sch}
S.S. Schweber.
\newblock {\em An Introduction to Relativistic Quantum Field Theory}.
\newblock Harper \& Row, 1961.

\bibitem[Sib93]{S}
Klaus Sibold.
\newblock {Renormalization Theory, Lecture Notes}.
\newblock 1993.

\bibitem[SS09]{SS}
Klaus Sibold and Gautier Solard.
\newblock {Conjugate variables in quantum field theory: The basic case}.
\newblock {\em Phys.Rev.}, D80:124041, 2009.

\bibitem[SU12]{SU}
R.U. Sexl and H.K. Urbantke.
\newblock {\em Relativity, Groups, Particles: Special Relativity and
  Relativistic Symmetry in Field and Particle Physics}.
\newblock Springer Vienna, 2012.

\end{thebibliography}

\end{document}